\DeclareFontFamily{U}{mathx}{\hyphenchar\font45}
\DeclareFontShape{U}{mathx}{m}{n}{ <5> <6> <7> <8> <9> <10>
   <10.95> <12> <14.4> <17.28> <20.74> <24.88> mathx10 }{}
\DeclareSymbolFont{mathx}{U}{mathx}{m}{n}
\DeclareMathAccent{\widecheck}{0}{mathx}{"71}
\theoremstyle{plain}
\newtheorem{theorem}{Theorem}[section]
\newtheorem{lemma}[theorem]{Lemma}
\newtheorem{coro}[theorem]{Corollary}
\theoremstyle{definition}
\newtheorem{definition}[theorem]{Definition}
\newtheorem{example}[theorem]{Example}
\newtheorem{remark}[theorem]{Remark}
\newcommand{\ts}{\hspace{0.5pt}}
\newcommand{\nts}{\hspace{-0.5pt}}
\newcommand{\CC}{\mathbb{C}\ts}
\newcommand{\RR}{\mathbb{R}\ts}
\newcommand{\ZZ}{{\ts \mathbb{Z}}}
\newcommand{\QQ}{{\ts \mathbb{Q}}}
\newcommand{\SSS}{\mathbb{S}}
\newcommand{\NN}{\mathbb{N}}
\newcommand{\cA}{\mathcal{A}}
\newcommand{\cB}{\mathcal{B}}
\newcommand{\cM}{\mathcal{M}}
\newcommand{\cL}{\mathcal{L}}
\newcommand{\vL}{\varLambda}
\newcommand{\ii}{\mathrm{i}\ts}
\newcommand{\ee}{\mathrm{e}}
\newcommand{\eps}{\varepsilon}
\newcommand{\dd}{\, \mathrm{d}}
\newcommand{\oplam}{\mbox{\Large $\curlywedge$}}
\newcommand{\smoplam}{\mbox{\small $\curlywedge$}}
\newcommand{\exend}{\hfill $\Diamond$}
\newcommand{\defeq}{\mathrel{\mathop:}=}
\DeclareMathOperator{\dens}{dens}
\DeclareMathOperator{\vol}{vol}
\DeclareMathOperator{\supp}{supp}
\newcommand{\Cu}{C_{\mathsf{u}}}
\newcommand{\Cc}{C_{\mathsf{c}}}
\newcommand{\WAP}{\mathcal{W\nts A\ts P}}
\newcommand{\SAP}{\mathcal{S\nts A \ts P}}
\newcommand{\myfrac}[2]{\frac{\raisebox{-2pt}{$#1$}}
      {\raisebox{0.5pt}{$#2$}}}
\begin{document}

\title{Eberlein decomposition for PV inflation systems}

\author{Michael Baake}
\address{Fakult\"at f\"ur Mathematik,
         Universit\"at Bielefeld, \newline
\hspace*{\parindent}Postfach 100131, 33501 Bielefeld, Germany}
\email{mbaake@math.uni-bielefeld.de }

\author{Nicolae Strungaru}
\address{Department of Mathematical Sciences,
         MacEwan University, \newline
\hspace*{\parindent}10700 \ts 104 Avenue,
         Edmonton, AB, Canada T5J 4S2}
\email{strungarun@macewan.ca}

\begin{abstract}
  The Dirac combs of primitive Pisot--Vijayaraghavan (PV) inflations
  on the real line or, more generally, in $\RR^d$ are analysed.  We
  construct a mean-orthogonal splitting for such Dirac combs that
  leads to the classic Eberlein decomposition on the level of the pair
  correlation measures, and thus to the separation of pure point
  versus continuous spectral components in the corresponding
  diffraction measures. This is illustrated with two guiding examples,
  and an extension to more general systems with randomness is
  outlined.
\end{abstract}

\maketitle

\section{Introduction}

Symbolic Pisot--Vijayaraghavan (PV) substitutions induce a
much-studied class of dynamical systems under the action of $\ts\ZZ$.
By means of suitable suspensions, they also define natural dynamical
systems under the continuous translation action of $\ts\RR$.  Of
particular interest is the \emph{self-similar} suspension, which turns
the symbolic substitution system into a tiling inflation; see
\cite[Ch.~4]{TAO1} and references therein, as well as \cite{KLS,TAO2},
for general background. This setting is naturally connected with
general inflation tilings in $\RR^d$, which is our point of view here.

As the famous Pisot (or PV) substitution conjecture is still
unresolved, despite great effort and progress (see \cite{Aki} for a
summary), it seems a good strategy to consider such systems in a wider
setting, where one particularly takes mixed spectra more into
focus. So, given a general PV inflation system, it is of considerable
interest to decompose its spectrum in a constructive fashion. Here, we
report on some progress in this direction, where we start from the
Dirac comb of a PV inflation and split it into two parts, one of which
leads to the pure point part of the diffraction measure and the other
to the continuous part.  Moreover, this splitting possesses an
orthogonality relation in an averaged (or Eberlein) sense, which can
also be established for more general systems.

An important predecessor of our approach is the work by Aujogue
\cite{Au}, where an Eberlein-type decomposition is established for
measure-theoretic dynamical systems, hence in an almost sure
sense. When dealing with the class of primitive inflation tilings (or
a characteristic point set representing them), which define strictly
ergodic Delone dynamical systems, one wants to achieve such a
decomposition constructively, starting from a single member of the
dynamical system, that is, from a single point set in $\RR^d$, say.
Below, in view of later extensions, we do not restrict our attention
to Delone sets, but allow for more general point sets; see
\cite{Ri,MR,NS-w} for some of the theory that will then become useful.
Starting from the Dirac comb of such a point set, we construct a
\emph{splitting} into two measures that results in the Eberlein
decomposition for the pair correlation measures, and do this in such a
way that these two measures are mutually orthogonal in an averaged (or
Eberlein) sense. For some systems, such a splitting has been used in
the treatment of diffraction theory of systems with mixed spectrum,
see \cite{TAO1,NS-w,NS11} and references therein, both for
deterministic and for stochastic systems. \smallskip

The paper is organised as follows. First, in Section~\ref{sec:prelim},
we recall some of the necessary tools and results, which is
systematically formulated for $\RR^d$ to accommodate self-similar
inflation tilings in sufficient generality. This is followed by a
guiding example for $d=1$ with mixed spectrum, namely a twisted
version of the classic Fibonacci tiling, in
Section~\ref{sec:Fibo}. Then, we state and prove the central
orthogonality result, which takes most of Section~\ref{sec:ortho},
before we can formulate the decomposition theorem in
Section~\ref{sec:decomp}.  Here, we also show how it works for the
Thue{\ts}--Morse system, which has an inflation factor that is not a
unit and serves as our second guiding example.  Finally, in
Section~\ref{sec:further}, we extend our splitting approach to two
systems of stochastic nature, namely the interaction-free lattice gas
and the Fibonacci random inflation system from \cite{GL}.

\section{Preliminaries}\label{sec:prelim}

Due to our setting with Euclidean inflation tilings, we work with
$\RR^d$. Let us mention in passing that many steps can be
  generalised to any second countable, locally compact Abelian group
  as well, but we suppress this in what follows.  Let
$\cA = (A^{}_{n})^{}_{n\in\NN}$ be an \emph{averaging sequence} in
$\RR^d$, by which we denote a sequence of compact sets that are
\emph{nested}, meaning $A^{}_{n} \subset A^{\circ}_{n+1}$ for all
$n\in\NN$, and \emph{exhausting}, which refers to
$\bigcup_n A^{}_{n} = \RR^d$.  We call an averaging sequence
\emph{symmetric} when $A_n = - A_n$ holds for all $n \in \NN$, and
write this as $\cA = - \cA$.  Later, we shall only consider averaging
sequences that have the \emph{van Hove} property, which is to say
that, for any compact set $K\subset \RR^d$, one has
\begin{equation}\label{eq:van-Hove}
  \lim_{n\to\infty} \frac{\vol (\partial^K \! A_n )}{\vol (A_n)}
  \, = \, 0 \ts ,
\end{equation}
where
$\partial^K C \defeq \bigl( (C+K)\setminus C^{\circ} \bigr) \cup
\bigl( ( \overline{\RR^d\setminus C} - K )\cap C \bigr)$ for $K$ and
$C$ compact; compare \cite[p.~29]{TAO1} and references given there for
more. Symmetric van Hove averaging sequences in $\RR^d$ that are
widely used include cubes and balls, such as
$\bigl( [-n,n]^d \bigr)_{n\in\NN}$ or
$\bigl( \{ \| x \| \leqslant n \} \bigr)_{n\in\NN}$.

Recall that $\Cu (\RR^d)$ denotes the complex vector space of bounded
functions on $\RR^d$ that are uniformly continuous.

\begin{definition}\label{def:Eberlein}
  Two functions $f,g \in \Cu (\RR^d)$ are said to possess
  an \emph{Eberlein convolution} with respect to,
  or along, a given averaging sequence $\cA$ in $\RR^d$ if
\[
     \bigl( f \stackrel{\cA}{\circledast} g \bigr) (x) \, \defeq
     \lim_{n\to\infty} \myfrac{1}{\vol (A_n) } \int_{\nts A^{}_n} \!
     f(x-t) \, g(t) \dd t
\]
exists for all $x \in \RR^d$. When $\cA$ is a fixed averaging sequence
that has the van Hove property and is symmetric, so $\cA = - \cA$, we
will usually write $f\circledast g$ instead of
$f \stackrel{\cA}{\circledast} g$.
\end{definition}

At this point, it is relevant to ask whether or when $\circledast$ is
commutative.

\begin{lemma}
  Let\/ $\cA$ be a van Hove averaging sequence in\/ $\RR^d$, let\/
  $f,g \in \Cu (\RR^d)$, and assume that the Eberlein convolution of\/
  $f$ and\/ $g$ exists along\/ $\cA$. Then, also the Eberlein
  convolution of\/ $g$ and\/ $f$ exists, this time along\/ $-\cA$, and
  one has
\[
     g \nts \stackrel{- \cA}{\circledast} \nts f \, = \,
       f \stackrel{\cA}{\circledast} \ts g \ts .
\]
In particular, if\/ $\cA$ is also symmetric, one has\/
$f \circledast \ts g = g \ts \circledast f$.
\end{lemma}

\begin{proof}
  This follows from the following (backwards) calculation,
\begin{align*}
   \bigl( f \stackrel{\cA}{\circledast} \ts g \bigr) (x) \, & =
   \lim_{n\to\infty} \myfrac{1}{\vol (A_n)} \int_{A_n} \!
     f(x-s) \, g(s) \dd s \\[2mm]
   & =  \lim_{n\to\infty} \myfrac{1}{\vol (A_n)} \int_{x\ts -A_n}
      \! f(r) \, g(x-r) \dd r \\[2mm]
   & =  \lim_{n\to\infty} \myfrac{1}{\vol (A_n)} \int_{-A_n}
      \! g (x-r) \, f(r) \dd r \, = \,
    \bigl( g \nts \stackrel{-\cA}{\circledast} \nts f \bigr) (x)  \ts ,
\end{align*}
where the first equality in the last line is a consequence of the van
Hove property of $\cA$ together with the boundedness of $f$ and $g$.
\end{proof}

From now on, whenever we write $f \circledast g$, it is understood
that the Eberlein convolution refers to a symmetric van Hove averaging
sequence and is assumed to exist, that is, the averaging limit along
$\cA$ exists for all $x \in \RR^d$. Here, we need a generalisation of
this notion to translation-bounded Radon measures on $\RR^d$, which
are the measures $\mu$ such that
$\sup_{t\in\RR^d} | \mu | (t +K) < \infty$ for some fixed compact set
$K\subset \RR^d$ with non-empty interior, where $| \mu |$ denotes the
total variation of $\mu$. We denote the class of translation-bounded
Radon measures by $\cM^{\infty} (\RR^d)$, and say that
$\mu, \nu \in \cM^{\infty} (\RR^d)$ possess an Eberlein convolution
with respect to a symmetric van Hove averaging sequence $\cA$ if the
limit
\[
    \mu \circledast \nu \, =  \lim_{n\to\infty}
    \frac{ \mu |^{}_{A_n} \! * \nu |^{}_{A_n}}{\vol (A_n)}
\]
exists in the vague topology, where $\mu |^{}_{K}$ denotes the
restriction of the measure $\mu$ to a compact set $K \subset \RR^d$;
see \cite{TAO1,MoSt} for background and \cite{LSS} for further
details. Note that this definition is the unique extension of the
concept for functions to the setting of Radon measures.

\begin{remark}\label{rem:commute}
  The restriction to symmetric van Hove averaging sequences looks a
  little artificial. It was chosen to bypass a small inconsistency in
  the standard definition from above, which is widely used in
  the literature. Indeed, for a general van Hove averaging sequence,
  a more consistent alternative would be
\[
    \mu \stackrel{\cA}{\circledast} \nu \, \defeq \lim_{n\to\infty}
    \frac{ \mu |^{}_{-A_n} \! * \nu |^{}_{A_n}}{\vol (A_n)} \ts ,
\]
which satisfies
$\mu \stackrel{\cA}{\circledast} \nu = \nu
\!\stackrel{-\cA}{\circledast}\! \mu$ in analogy to above. This
definition is the consistent extension of
Definition~\ref{def:Eberlein}.  Indeed, when $\mu$ and $\nu$ are
absolutely continuous measures with Radon--Nikodym densities $f$ and
$g$, respectively, one finds
\[
    \mu \stackrel{\cA}{\circledast} \nu \, = \,
    \bigl( f \stackrel{\cA}{\circledast} g \bigr) \lambda^{}_{\mathrm{L}}
    \quad \text{and} \quad
    \nu \stackrel{\cA}{\circledast} \mu \, = \,
    \bigl( g \stackrel{\cA}{\circledast} f \bigr)
    \lambda^{}_{\mathrm{L}} \ts ,
\]
where $\lambda^{}_{\mathrm{L}}$ denotes Lebesgue measure on $\RR^d$.
This observation follows from
\[
\begin{split}
   \bigl( \mu |^{}_{-A_n} \! * \nu |^{}_{A_n} \bigr) (h) \, & = \,
   \int_{\RR^d} \int_{\RR^d} h(x+y) \, 1^{}_{-A^{}_{n}} (x) \, f(x) \,
      1^{}_{\nts A^{}_{n}} (y) \, g(y) \dd x \dd y \\[2mm]
      & = \, \int_{\RR^d} h(s) \int_{A_n \cap (s+A_n)} f(s-y) \, g(y)
       \dd y \dd s  \ts ,
\end{split}
\]
where $h \in \Cc (\RR^d)$ is arbitrary, after dividing by $\vol (A_n)$
and taking the limit as $n\to\infty$.

When $\cA$ is also symmetric, these subtleties go away and
$\circledast$ becomes commutative. Since this extra assumption poses
no relevant restriction to any of our later arguments, we will usually
make it, but always say so in our formal statements.  \exend
\end{remark}

Next, we need another notion, the \emph{Fourier--Bohr} (FB)
coefficients.

\begin{definition}\label{def:FB}
  A function $g \in \Cu (\RR^d)$ possesses the \emph{FB
    coefficient} at $k\in\RR^d$ with respect to $\cA$ if
\[
      c^{}_{g} (k) \, \defeq \lim_{n \to\infty}
      \myfrac{1}{\vol (A_n)} \int_{A^{}_{n}} \!
      \ee^{-2\pi\ii k x} g (x)  \dd x
\]
exists. We say that $g$ possesses FB coefficients relative to $\cA$
when $c^{}_{g} (k)$ exists for all $k \in \RR^d$.

Likewise, a Radon measure $\mu \in \cM^\infty (\RR^d)$
possesses FB coefficients with respect to $\cA$ on a set
$S\subseteq \RR^d$ if
\[
    c^{}_{\mu} (k) \, \defeq \lim_{n \to \infty}
    \myfrac{1}{\vol (A_n)} \int_{A^{}_{n}} \! \ee^{-2\pi\ii kx} \dd \mu (x)
\]
exists for all $k\in S$.
\end{definition}

From now on, whenever we write $c^{}_{g} (k)$ or $c^{}_{\mu} (k)$, it
is understood that the corresponding limit exists.

\smallskip

The following result, which can be derived from \cite[Lemma~8]{Lenz}
and is a mild variant of \cite[Lemma~1.9 and Cor.~1.20]{LSS2}, see
also \cite[Prop.~8.2]{ARMA}, provides the connection between the FB
coefficients of a translation-bounded Radon measure $\mu$ and the
convolutions $\mu*\varphi$ with arbitrary $\varphi \in \Cc (\RR^d)$,
where the latter denotes the space of continuous functions on $\RR^d$
with compact support. Below, we use $\widehat{\varphi}$ for the
Fourier transform of a function $\varphi$, employing the conventions
of \cite[Ch.~8]{TAO1}.

\begin{lemma}\label{lem:FB-rel}
  Let\/ $\cA$ be a general van Hove averaging sequence in\/ $\RR^d$,
  and consider a Radon measure\/ $\mu \in \cM^\infty (\RR^d)$. If\/
  $\varphi \in \Cc (\RR^d)$ and\/ $k \in \RR^d$, one has
\[
   \lim_{n \to \infty} \myfrac{1}{\vol (A_n)} \,
    \biggl| \int_{A^{}_n} \! \ee^{-2\pi\ii k t}
    \bigl(\varphi*\mu \bigr) (t) \dd t  \, - \,
    \widehat{\varphi} \ts (k) \! \int_{A^{}_n} \!
    \ee^{-2\pi\ii k t} \dd \mu(t)    \biggr| \,  = \,  0 \ts .
\]
In particular, if\/ $c^{}_{\mu} (k)$ exists, then so does\/
$c^{}_{\varphi*\mu} (k)$, and one has
\[
     c^{}_{\varphi \ts *\mu} (k) \, = \, \widehat{\varphi} \ts (k)\,
     c^{}_{\mu} (k ) \ts .
\]
Conversely, if\/ $c^{}_{\varphi \ts *\mu} (k)$ exists for some\/
$\varphi$ with\/ $\widehat{\varphi}\ts (k) \ne 0$, then\/
$c^{}_{\mu} (k)$ exists as well.
\end{lemma}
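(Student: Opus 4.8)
The plan is to prove the displayed limit first, since both the ``in particular'' part and the converse follow from it at once by dividing through by $\vol (A_n)$ and using that $\widehat{\varphi}\ts (k)$ is a constant (nonzero, in the converse). Throughout I would write $K \defeq \supp \varphi$, a compact set, and recall that $\varphi * \mu$ is bounded and uniformly continuous, so that $c^{}_{\varphi * \mu}(k)$ genuinely makes sense as the FB coefficient of a function in the sense of Definition~\ref{def:FB}; note also that $|\mu|$ is translation-bounded along with $\mu$. The first step is to unfold the convolution and apply Fubini, which is legitimate because $\varphi$ has compact support and $\mu$ is translation-bounded, so the double integral converges absolutely over the compact region $A_n$. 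After the substitution $u = t-s$ this gives
\[
   \int_{A_n} \ee^{-2\pi\ii k t} (\varphi * \mu)(t) \dd t \, = \,
   \int_{\RR^d} \ee^{-2\pi \ii k s}
   \biggl( \int_{A_n - s} \ee^{-2\pi\ii k u} \varphi(u) \dd u \biggr) \dd \mu(s) \ts .
\]
Denoting the inner integral by $F_n(s)$, one has $F_n(s) = \widehat{\varphi}\ts(k)$ whenever $s + K \subseteq A_n$, while $F_n(s) = 0$ whenever $(s+K)\cap A_n = \varnothing$; in all cases $|F_n(s)| \leqslant \| \varphi \|^{}_{1}$.

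Setting $D_n(s) \defeq F_n(s) - \widehat{\varphi}\ts(k)\, 1^{}_{A_n}(s)$, the quantity inside the absolute value in the statement equals $\int_{\RR^d} \ee^{-2\pi\ii ks} D_n(s) \dd\mu(s)$, and $|D_n| \leqslant 2 \| \varphi \|^{}_{1}$ throughout. The crucial step is to locate the support of $D_n$. Since $D_n(s) = 0$ both when $s+K\subseteq A_n$ and $s \in A_n$, and when $(s+K)\cap A_n = \varnothing$ and $s\notin A_n$, the remaining configurations are: $s+K$ straddles the boundary of $A_n$, or $s+K$ lies inside $A_n$ while $s\notin A_n$, or $s+K$ lies outside $A_n$ while $s\in A_n$. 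With $\widetilde K \defeq K\cup(-K)$, a short case analysis (splitting on whether $s\in A_n$ and picking a point $s+\kappa\in A_n$ whose translate by some $\kappa'\in K$ escapes $A_n$) places each such $s$ either in $(A_n + \widetilde K)\setminus A_n^{\circ}$ or in $(\overline{\RR^d\setminus A_n} - \widetilde K)\cap A_n$, so that $\{ D_n \neq 0 \}\subseteq \partial^{\widetilde K} A_n$. I expect this containment to be the main obstacle, as it requires matching the three ``bad'' positions of $s+K$ relative to $A_n$ to the two topological pieces of the van Hove boundary $\partial^{\widetilde K}A_n$ with the correct interior/closure bookkeeping.

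With the support confined, I would estimate
\[
   \biggl| \int_{\RR^d} \ee^{-2\pi\ii ks} D_n(s)\dd\mu(s) \biggr| \,
   \leqslant \, 2 \, \| \varphi \|^{}_{1} \;
   |\mu|\bigl(\partial^{\widetilde K} A_n\bigr) \ts ,
\]
and then invoke the standard consequence of translation-boundedness that $|\mu|(\partial^{\widetilde K} A_n) = o\bigl(\vol (A_n)\bigr)$ as $n\to\infty$: the total variation of a translation-bounded measure over a region is bounded by a constant times the Lebesgue volume of a bounded enlargement of that region, and the latter volume is $o(\vol (A_n))$ by the van Hove property~\eqref{eq:van-Hove}. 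Dividing by $\vol (A_n)$ yields the asserted limit.

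Finally, for the ``in particular'' statement I would divide the displayed identity by $\vol (A_n)$: since the normalised difference tends to $0$, the two averages $\frac{1}{\vol (A_n)}\int_{A_n}\ee^{-2\pi\ii kt}(\varphi*\mu)(t)\dd t$ and $\widehat{\varphi}\ts(k)\,\frac{1}{\vol (A_n)}\int_{A_n}\ee^{-2\pi\ii kt}\dd\mu(t)$ share the same limiting behaviour. Hence existence of $c^{}_\mu(k)$ forces existence of $c^{}_{\varphi*\mu}(k)$ with $c^{}_{\varphi*\mu}(k) = \widehat{\varphi}\ts(k)\, c^{}_\mu(k)$; and conversely, when $\widehat{\varphi}\ts(k)\neq 0$, the existence of $c^{}_{\varphi*\mu}(k)$ lets me solve for the $\mu$-average and conclude that $c^{}_\mu(k)$ exists and equals $c^{}_{\varphi*\mu}(k)/\widehat{\varphi}\ts(k)$.
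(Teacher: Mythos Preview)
Your argument is correct and follows the same overall strategy as the paper: rewrite the difference via Fubini as an integral whose integrand is supported on a van~Hove boundary, then use translation-boundedness to kill it. The difference is in which variable you localise. You integrate against $\mu$ last, so the support of $D_n$ sits in $\partial^{\widetilde K}\! A_n$ in the $s$-variable and your final bound is $2\ts\|\varphi\|^{}_{1}\,\lvert\mu\rvert\bigl(\partial^{\widetilde K}\! A_n\bigr)$; this then needs the auxiliary (standard) fact that $\lvert\mu\rvert\bigl(\partial^{K'}\! A_n\bigr)=o\bigl(\vol(A_n)\bigr)$ for translation-bounded $\mu$, which you invoke but do not prove. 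The paper instead keeps Lebesgue integration on the outside: it writes the difference as $\int\!\!\int \ee^{-2\pi\ii kt}\bigl(1^{}_{A_n}(t)-1^{}_{A_n}(s)\bigr)\varphi(t-s)\dd\mu(s)\dd t$, observes that the integrand vanishes for $t\notin\partial^{K}\! A_n$, and obtains the bound $\bigl\|\,\lvert\varphi\rvert*\lvert\mu\rvert\,\bigr\|_{\infty}\,\vol\bigl(\partial^{K}\! A_n\bigr)/\vol(A_n)$, to which the van~Hove property \eqref{eq:van-Hove} applies directly without any extra lemma. Your version buys a slightly cleaner support analysis (a single function $D_n$ of one variable), at the cost of that one additional standard fact; the paper's version is more self-contained. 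Either way the proof goes through, and your case analysis placing $\{D_n\ne 0\}\subseteq\partial^{\widetilde K}\! A_n$ is accurate.
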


\begin{proof}
Let us first note that
\[
\begin{split}
   \widehat{\varphi}(k) \int_{A^{}_n} \! \ee^{-2 \pi \ii k t} \dd \mu(t)
   \, & = \, \widehat{\varphi}(k) \int_{\RR^d}
    1^{}_{\nts A_n}(s) \, \ee^{-2 \pi \ii k s} \dd \mu(s)  \\[2mm]
  & =  \int_{\RR^d} \int_{\RR^d} \ee^{-2 \pi \ii k (r+s)} \ts \varphi (r)
     \dd r \; 1^{}_{\nts A_n}(s) \dd \mu(s)  \\[2mm]
  &=  \int_{\RR^d} \int_{\RR^d} \! 1^{}_{\nts A_n}(s) \, \ee^{-2 \pi \ii kt}
    \ts \varphi (t-s) \dd \mu(s) \dd t \ts ,
\end{split}
\]
where we used the substitution $t = r+s$ and Fubini's theorem for the
last line.

Consequently, we have
\[
\begin{split}
  \bigg| \int_{A^{}_n} \! \ee^{-2 \pi \ii k t} &
      \bigl( \varphi*\mu \bigr) (t) \dd t \, - \, \widehat{\varphi}(k)
       \int_{A^{}_n} \! \ee^{-2 \pi \ii k t} \dd \mu(t) \ts \bigg|  \\[2mm]
  &= \, \bigg| \int_{\RR^d} \int_{\RR^d}  \ee^{-2 \pi \ii k t} \ts
      \bigl( 1^{}_{\nts A_n}(t) - 1^{}_{\nts A_n}(s) \bigr)
      \, \varphi(t-s) \dd \mu(s) \dd t \ts \bigg| \\[2mm]
   &\leqslant \, \int_{\RR^d} \int_{\RR^d} \big|
      1^{}_{\nts A_n}(t) - 1^{}_{\nts A_n}(s)\big|  \, | \varphi(t-s) |
      \dd \lvert \mu \rvert (s) \dd t  .
\end{split}
\]
With $K =\supp(\varphi)$, we have
$\big| 1^{}_{\nts A_n}(t) - 1^{}_{\nts A_n}(s) \big| \, \lvert
\varphi(t-s)\rvert =0$ for any $t \notin \partial^K \! A_n$, hence
\[
    \big| 1^{}_{\nts A_n}(t) - 1^{}_{\nts A_n}(s)\big| \,
    | \varphi(t-s)| \, \leqslant \,
    1^{}_{\partial^K \! A_n}(t) \, \lvert \varphi (t-s) \vert \ts ,
\]
which then gives
\[
\begin{split}
   \myfrac{1}{\vol(A_n)} & \bigg| \int_{A^{}_n} \! \ee^{-2 \pi \ii k t}
      \bigl( \varphi*\mu \bigr) (t) \dd t \, - \,
      \widehat{\varphi} (k) \int_{A^{}_n} \!  \ee^{-2 \pi \ii k t}
      \dd \mu (t) \ts \bigg| \\[2mm]
  & \leqslant \, \myfrac{1}{\vol(A_n)}  \int_{\RR^d} \int_{\RR^d}
       1^{}_{\partial^K \! A_n} (t) \,
       \lvert \varphi (t-s) \rvert
       \dd \lvert \mu \rvert (s) \dd t \\[2mm]
   & = \, \myfrac{1}{\vol(A_n)}  \int_{\partial^K \! A_n}
        \int_{\RR^d} \vert \varphi (t-s) \rvert \dd \vert \mu \rvert (s)
        \dd t \\[2mm]
   &= \, \myfrac{1}{\vol(A_n) } \int_{\partial^K \! A_n} \!
       \bigl( \lvert \varphi \rvert * \lvert \mu \rvert \bigr) (t)
       \dd t  \:  \leqslant  \: \big\| \lvert \varphi \rvert *
       \vert \mu \rvert \big\|_{\infty}
       \frac{ \vol (\partial^K A_n )}{\vol(A_n)} \ts .
\end{split}
\]
As $n\to\infty$, the claim now follows from the
translation-boundedness of $\mu$ and the van Hove property
of $\cA$ from Eq.~\eqref{eq:van-Hove}.
\end{proof}

What we wrote down so far are special relations involving continuous
\emph{characters} on $\RR^d$, which are the elements of the dual
group. Since $\RR^d$ is self-dual, it is common to employ an additive
notation for the characters, and denote them by
$\chi^{}_{k} \colon \RR^d \xrightarrow{\quad} \SSS^1$, where
$k\in\RR^d$ is fixed and the mapping is given by
\[
    x \, \mapsto \, \chi^{}_{k} (x)  \defeq  \ee^{2 \pi \ii k x} .
\]
Clearly, one then has $\chi^{}_{k} (x) \ne 0$ for all $x\in\RR^d$,
together with $\overline{\chi^{}_{k}} = \chi^{}_{-k}$.

Before we embark on the general result, let us discuss our first
guiding example for $d=1$, which is built as a simple
extension \cite{BFG} of the classic Fibonacci tiling of the real line.

\section{A simple PV inflation with mixed spectrum}\label{sec:Fibo}

Fix the alphabet $\{ a, \underline{a}, b, \underline{b}\ts  \}$
and consider the aperiodic substitution rule
\[
    \varrho \colon \;  a \mapsto a\ts b \ts , \quad
    \underline{a} \mapsto \underline{a\nts}\ts\ts \underline{b} \ts , \quad
    b \mapsto \underline{a} \ts , \quad
    \underline{b} \mapsto a \ts ,
\]
which has the substitution matrix
\[
     M^{}_{\varrho} \, = \, \begin{pmatrix}
     1 & 0 & 0 & 1 \\ 0 & 1 & 1 & 0 \\
     1 & 0 & 0 & 0 \\ 0 & 1 & 0 & 0 \end{pmatrix} .
\]
Since $M^{}_{\varrho}$ has Perron--Frobenius eigenvalue
$\tau = \frac{1}{2} ( 1 + \sqrt{5}\, )$, which is a PV unit, with
corresponding left eigenvector $(\tau, \tau, 1, 1 )$, one can turn
$\varrho$ into a PV inflation with prototiles (intervals) of length
$\tau$ for $a$ and $\underline{a}\ts$, and of length $1$ for $b$ and
$\underline{b}\ts$; see \cite[Sec.~3.2]{BFG} and references therein
for details. On identifying $a$ with $\underline{a}$ and $b$ with
$\underline{b}\ts$, one obtains the classic Fibonacci tilings of the
real line \cite[Ex.~4.6 and Sec.~9.4.1]{TAO1} with pure point
spectrum, both in the diffraction and in the dynamical sense. This
implies that the `twisted' inflation system induced by $\varrho$ is
an almost everywhere $2:1$ extension of the classic Fibonacci
system. By standard results, see \cite{ABKL} and references
  therein, this implies that the mapping of the twisted system onto
its maximal equicontinuous factor is also $2:1$ almost
everywhere. Consequently, the twisted system must have mixed
spectrum, with a pure point and a continuous part.  The latter is
purely singular continuous in this case, by
\cite[Thm.~3.2]{BFG}.  Let us sketch how to arrive at this conclusion
constructively.

Now, working with an inflation fixed point and the left endpoints of
the intervals of type
$\alpha \in \{ a, \underline{a}\ts , b, \underline{b}\ts \}$, we
always get $\vL_{\alpha} \in \ZZ[\tau]$, wherefore we can employ the
natural \emph{cut and project scheme} (CPS) of the Fibonacci system,
abbreviated by $(\RR, \RR, \cL)$, as constructed and described in
detail in \cite[Sec.~7.2]{TAO1}. Here, the CPS is given by
\[
   \renewcommand{\arraystretch}{1.2}\begin{array}{r@{}ccccc@{}l}
   & \RR & \xleftarrow{\;\;\; \pi \;\;\; }
      & \RR \nts\nts \times \nts\RR &
        \xrightarrow{\;\: \pi^{}_{\text{int}} \;\: } & \RR & \\
   & \cup & & \cup & & \cup & \hspace*{-1ex}
   \raisebox{1pt}{\text{\footnotesize dense}} \\
   & \pi (\cL) & \xleftarrow{\;\ts 1-1 \;\ts } & \cL &
        \xrightarrow{ \qquad } &\pi^{}_{\text{int}} (\cL) & \\
   & \| & & & & \| & \\
   & \ZZ[\tau] & \multicolumn{3}{c}{\xrightarrow{\qquad\qquad\:\,\star
       \,\:\qquad\qquad}}
       &  {\ZZ[\tau]}  & \\
\end{array}\renewcommand{\arraystretch}{1}
\]
where $\star$ denotes the star map of the CPS, which is the
restriction to $\ZZ(\tau)$ of the unique field automorphism of
$\QQ (\sqrt{5}\,)$ induced by $\sqrt{5} \mapsto - \sqrt{5}$, and
$\cL \defeq \big\{ (x, x^{\star}) : x \in \ZZ[\tau] \big\}$ is the
Minkowski embedding of $\ZZ[\tau]$, which is a lattice in
$\RR^2\simeq \RR \times \RR$ of density $1/\sqrt{5}\ts$; see
\cite[Sec.~3.4 and Ex.~7.3]{TAO1} for background and details. Setting
$W^{}_{\!\alpha} = \overline{\vL^{\star}_{\alpha}}$, one obtains
\cite{BFG} the compact intervals
\[
    W_{\! a} \, = \, W_{\! \underline{a}} \, = \, [\tau-2,\tau-1]
    \quad \text{and} \quad
    W_{\nts b} \, = \, W_{\nts \underline{b}} \, = \, [-1,\tau-2] \ts .
\]
Now, each set
$\oplam (W_{\nts \alpha}) \defeq \{ x \in \ZZ[\tau]: x^{\star} \in
W_{\nts \alpha} \}$ is a regular model set with pure point diffraction
\cite[Thm.~9.4]{TAO1}, while this is \emph{not} true of the point sets
$\vL_{\alpha}$.

One can check that each $\vL_{\alpha}$, which satisfies
$\vL_{\alpha} \subseteq \oplam (W_{\nts \alpha})$ by construction, has
half the density of $\oplam (W_{\nts \alpha})$; compare
\cite[Sec.~9.4.1]{TAO1} for the results of the two-letter Fibonacci
chain.  By \cite[Thm.~5.3]{BG}, the point sets $\vL^{\star}_{\alpha}$
are uniformly distributed in $W^{}_{\! \alpha}$, and each of the
measures
\[
    \nu^{}_{\alpha} \, = \, \delta^{}_{\! \vL_{\alpha}}
       - \, \myfrac{1}{2} \, \delta^{}_{\nts \smoplam (W_{\alpha})}
\]
has vanishing FB coefficients, where we use $\delta^{}_{S} \defeq
\sum_{x\in S} \delta^{}_{x}$ to denote the \emph{Dirac comb} of a
point set $S$. So, with $\omega^{}_{\alpha} =
\frac{1}{2} \ts \delta^{}_{\nts \smoplam (W_{\alpha})}$, one has a
decomposition
\[
      \delta^{}_{\! \vL_{\alpha}} \, = \,
      \omega^{}_{\alpha} + \ts \nu^{}_{\alpha}
\]
into two summands. The crucial observation now is that the first gives
rise to the pure point part of the spectrum, and the second to the
continuous part, while all cross terms under the Eberlein convolution
with respect to any symmetric van Hove averaging sequence in $\RR$
vanish. Further details, and a closely related example with a more
complicated window structure, are discussed in \cite[Sec.~7]{BG}. Our
goal now is to substantiate the decomposition claim and prove it in
sufficient generality.

\section{Orthogonality for Eberlein convolution}\label{sec:ortho}

Let us first state an elementary result on the connection between the FB
coefficients of a function $f\in\Cu (\RR^d)$ and the Eberlein
convolution of $f$ with the characters on $\RR^d$.

\begin{lemma}\label{lem:FB-Eber}
  Let\/ $\cA$ be a symmetric van Hove averaging sequence in\/ $\RR^d$,
  and consider a function\/ $f \in \Cu (\RR^d)$ together with a
  character\/ $\chi^{}_{k}$. Then, with respect to\/ $\cA$, the
  following statements are equivalent.
\begin{enumerate}\itemsep=2pt
\item The FB coefficient\/ $c^{}_{f} (k)$ exists.
\item The Eberlein convolution\/ $\chi^{}_{k} \circledast f$ exists.
\item There is some\/ $x\in\RR^d$ such that\/
  $\lim_{n\to\infty} \frac{1}{\vol (A_n)} \int_{\RR^d} \chi^{}_{k}
  (x-t) \, f(t) \dd t$ exists.
\end{enumerate}
Further, assume that\/ $c^{}_{f} (k)$ exists. Then, for all\/
$x\in\RR^d$, one has the relation
\[
    \bigl( \chi^{}_{k}  \circledast f \bigr) (x) \, = \,
     \chi^{}_{k} (x) \, c^{}_{f} (k) \ts .
\]
\end{lemma}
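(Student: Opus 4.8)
The plan is to exploit the multiplicativity of characters, which reduces the entire statement to an exact, finite-$n$ identity that needs no analysis at all. First, I would record the elementary factorisation
\[
   \chi^{}_{k} (x-t) \, = \, \chi^{}_{k} (x) \,
   \overline{\chi^{}_{k} (t)} \, = \,
   \chi^{}_{k} (x) \, \ee^{-2\pi\ii k t} \ts ,
\]
valid for all $x, t \in \RR^d$ because $\chi^{}_{k}$ is a homomorphism into $\SSS^1$ with $\overline{\chi^{}_{k}} = \chi^{}_{-k}$. Substituting this into the averaged integral from Definition~\ref{def:Eberlein} and extracting the constant factor $\chi^{}_{k} (x)$ from the $t$-integration gives, for each fixed $x \in \RR^d$ and each $n \in \NN$,
\[
   \myfrac{1}{\vol (A_n)} \int_{A_n} \! \chi^{}_{k} (x-t) \, f(t) \dd t
   \, = \, \chi^{}_{k} (x) \cdot
   \myfrac{1}{\vol (A_n)} \int_{A_n} \! \ee^{-2\pi\ii k t} \, f(t) \dd t \ts ,
\]
where the second factor on the right is precisely the finite-$n$ approximant of the FB coefficient $c^{}_{f} (k)$ from Definition~\ref{def:FB}.

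The key observation is then that $\lvert \chi^{}_{k} (x) \rvert = 1$, so $\chi^{}_{k} (x)$ is a nonzero constant for every $x$. Hence, for any single $x$, the left-hand side above converges as $n \to \infty$ if and only if the FB approximant on the right does, the two limits differing only by the nonzero factor $\chi^{}_{k} (x)$. This single identity yields all three equivalences simultaneously. For (1)$\,\Rightarrow\,$(2), existence of $c^{}_{f} (k)$ forces convergence of the right-hand side for every $x$ at once, so $\chi^{}_{k} \circledast f$ exists. The step (2)$\,\Rightarrow\,$(3) is immediate, since (2) asserts convergence at all $x$, in particular at some $x$. For (3)$\,\Rightarrow\,$(1), convergence at a single $x^{}_{0}$ allows me to divide by $\chi^{}_{k} (x^{}_{0}) \ne 0$ and recover convergence of the FB approximant, that is, existence of $c^{}_{f} (k)$. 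Finally, assuming $c^{}_{f} (k)$ exists, passing to the limit in the displayed identity shows that the right-hand side tends to $\chi^{}_{k} (x) \, c^{}_{f} (k)$ for every $x$, which is exactly $\bigl( \chi^{}_{k} \circledast f \bigr) (x)$ and so establishes the closing relation.

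I do not anticipate a genuine obstacle, as the argument is purely algebraic and holds verbatim at each finite $n$; in particular, neither the van Hove property nor the symmetry of $\cA$ is invoked in the core reasoning, these entering only through the standing conventions that render $\chi^{}_{k} \circledast f$ well defined. The one point that warrants mild care is the bookkeeping of signs in the exponent: one must check that $\overline{\chi^{}_{k}}$ matches the kernel $\ee^{-2\pi\ii k t}$ appearing in the definition of $c^{}_{f} (k)$, so that the extracted integral is literally the FB approximant and not its complex conjugate.
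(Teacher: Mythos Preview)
Your argument is correct and is essentially identical to the paper's own proof: both factor $\chi^{}_{k}(x-t) = \chi^{}_{k}(x)\,\overline{\chi^{}_{k}(t)}$ inside the finite-$n$ integral, pull out the nonzero constant $\chi^{}_{k}(x)$, and then observe that convergence on one side is equivalent to convergence on the other. The paper's version is simply terser, omitting the explicit cycle (1)$\Rightarrow$(2)$\Rightarrow$(3)$\Rightarrow$(1) that you spell out.
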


\begin{proof}
For arbitrary $x \in \RR^d$ and $n \in \NN$, we have
\[
    \int_{A^{}_n} \! \chi^{}_{k} (x-t) \, f(t) \dd t \, =
    \int_{A^{}_n} \! \chi^{}_{k}(x) \,
    \overline{\chi^{}_{k}(t)} \, f(t) \dd t
    \, = \,  \chi^{}_{k} (x) \! \int_{A^{}_n} \!
   \overline{\chi^{}_{k}(t)} \, f(t) \dd t  \ts .
\]
Since $\chi^{}_{k} (x) \ne 0$, the claimed equivalences, as well as
the final identity, follow via dividing by $\vol (A_n)$ and taking the
limit as $n\to\infty$.
\end{proof}

This has an immediate consequence as follows.

\begin{coro}\label{coro:P-coeff}
  Let\/ $\cA$ be a symmetric van Hove averaging sequence in\/
  $\RR^d$, and consider a function\/ $f \in \Cu (\RR^d)$ together
  with a trigonometric polynomial\/
  $P= \sum_{j=1}^n \alpha^{}_j \, \chi^{}_{k_j}$. If the
  FB coefficients\/ $c^{}_{f} (k_j) $ exist for all\/
  $1 \leqslant j \leqslant n$, the
  Eberlein convolution\/ $P \circledast f$ exists, too, with
\[
    \pushQED{\qed}
    P \circledast f \, = \sum_{j=1}^{n}
    \alpha^{}_{j}  \, c^{}_{f} (k_j) \, \chi^{}_{k_j}  .
    \qedhere\popQED
\]
\end{coro}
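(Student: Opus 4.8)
The plan is to reduce the statement about the trigonometric polynomial $P$ to the single-character case already settled in Lemma~\ref{lem:FB-Eber}, and then assemble the pieces by linearity. The key structural fact I want to exploit is that the Eberlein convolution $\circledast$, being defined as a limit of an integral against $f$, is linear in its first argument \emph{provided the relevant limits exist}. So the whole proof hinges on first securing existence for each summand and then invoking linearity of the limit.

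First I would fix an arbitrary $x \in \RR^d$ and write down the finite-sum expression
\[
   \myfrac{1}{\vol (A_n)} \int_{A^{}_n} \! P(x-t) \, f(t) \dd t
   \, = \sum_{j=1}^{n} \alpha^{}_{j} \,
   \myfrac{1}{\vol (A_n)} \int_{A^{}_n} \! \chi^{}_{k_j} (x-t) \, f(t) \dd t \ts ,
\]
using only the definition $P = \sum_{j} \alpha^{}_j \chi^{}_{k_j}$ and the finiteness of the sum, which lets me pull the integral and the normalisation across the sum without any convergence worry at finite $n$. This is a purely algebraic rearrangement valid for every $n$.

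Next I would apply Lemma~\ref{lem:FB-Eber} termwise. By hypothesis each FB coefficient $c^{}_{f} (k_j)$ exists, so the equivalence (1)$\Leftrightarrow$(2) in that lemma guarantees that each individual Eberlein convolution $\chi^{}_{k_j} \circledast f$ exists, and the final identity of the lemma evaluates it as $\bigl( \chi^{}_{k_j} \circledast f \bigr)(x) = \chi^{}_{k_j}(x) \, c^{}_{f}(k_j)$. Since there are only finitely many summands and each of the $n$ limits exists, the limit of the finite sum equals the sum of the limits; passing $n \to \infty$ in the displayed identity then gives
\[
   \bigl( P \circledast f \bigr)(x) \, = \sum_{j=1}^{n}
   \alpha^{}_{j} \, \chi^{}_{k_j}(x) \, c^{}_{f}(k_j) \ts ,
\]
which, read as an identity of functions of $x$, is exactly the claimed formula $P \circledast f = \sum_{j} \alpha^{}_j \, c^{}_{f}(k_j) \, \chi^{}_{k_j}$.

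I do not expect any genuine obstacle here; the corollary is essentially a linearity remark dressed up for a specific finite linear combination of characters. The only point requiring the slightest care is the interchange of the limit with the finite sum, but since the sum has finitely many terms and each term converges by Lemma~\ref{lem:FB-Eber}, this is the elementary fact that a finite sum of convergent sequences converges to the sum of the limits. Existence of $P \circledast f$ for \emph{all} $x$ (as required by our standing convention for $\circledast$) follows because the termwise evaluation holds for every $x \in \RR^d$, the character values $\chi^{}_{k_j}(x)$ being defined everywhere.
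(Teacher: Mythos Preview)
Your argument is correct and is exactly the approach the paper has in mind: the corollary is stated there without proof, as an immediate consequence of Lemma~\ref{lem:FB-Eber} via finite linearity. There is nothing to add.
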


To continue, we need another notion as follows.

\begin{definition}
  Let $\cA$ be a van Hove averaging sequence in $\RR^d$ and
  $\mu \in \cM^{\infty} (\RR^d)$.  If the FB coefficients of $\mu$
  with respect to $\cA$ exist, the corresponding \emph{FB spectrum}
  of $\mu$ is the set $\{ k \in \RR^d : c^{}_{\mu} (k) \ne 0 \}$.
   Further, we say that  $\mu$ has \emph{null FB spectrum}
   with respect to $\cA$ if, for all $k\in\RR^d$, one has
\[
      c^{}_{\mu} (k) \,  = \lim_{n \to\infty} \myfrac{1}{\vol (A_n)}
      \int_{A^{}_n} \! \overline{\chi^{}_{k} (t)} \dd \mu(t)
      \, = \, 0 \ts ,
\]
  which is to say that all FB coefficients of\/ $\mu$ exist and vanish.

  Likewise, $f \in \Cu (\RR^d)$ has \emph{null FB spectrum} with
  respect to $\cA$ if, for all $k\in\RR^d$, we have
\[
    c^{}_{f} (k) \, = \lim_{n \to\infty} \myfrac{1}{\vol (A_n)}
    \int_{A^{}_n} \! \overline{\chi^{}_{k} (t)} \, f(t)  \dd t
    \, = \, 0 \ts .
\]
\end{definition}

An immediate consequence of Lemma~\ref{lem:FB-Eber} and
Lemma~\ref{lem:FB-rel} is the following.

\begin{coro}\label{cor:FB-null}
  Let\/ $\cA$ be a symmetric van Hove averaging sequence in\/
  $\RR^d \nts$. Consider a measure\/ $\mu \in \cM^{\infty} (\RR^d )$
  and a function\/ $f\in \Cu (\RR^d)\nts$.  Then, the following two
  properties hold.
\begin{itemize}\itemsep=2pt
\item [(a)] If\/ $\mu$ has null FB spectrum with respect to $\cA$, the
  function\/ $\varphi * \mu$, for any\/ $\varphi \in \Cc (\RR^d)$, has
  null FB spectrum with respect to\/ $\cA$ as well.
\item [(b)] If\/ $f$ has null FB spectrum with respect to\/ $\cA$, one
  has\/ $P \circledast f = 0$ for all trigonometric polynomials\/ $P$,
  where\/ $\circledast$ is the Eberlein convolution along\/ $\cA$.
  \qed
\end{itemize}
\end{coro}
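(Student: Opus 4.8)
The plan is to read off both parts directly from the two cited lemmas, since the definition of null FB spectrum already guarantees that the relevant FB coefficients exist and vanish, so that no fresh limit arguments are required.

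For part (a), I would invoke Lemma~\ref{lem:FB-rel}. By the null-FB-spectrum hypothesis on $\mu$, the coefficient $c^{}_{\mu} (k)$ exists and equals $0$ for every $k \in \RR^d$. Fixing an arbitrary $\varphi \in \Cc (\RR^d)$ and an arbitrary $k$, the lemma then asserts that $c^{}_{\varphi * \mu} (k)$ exists with
\[
   c^{}_{\varphi * \mu} (k) \, = \, \widehat{\varphi} \ts (k) \, c^{}_{\mu} (k) \, = \, 0 \ts .
\]
As $k \in \RR^d$ was arbitrary, this says precisely that $\varphi * \mu$ has null FB spectrum with respect to $\cA$, which is the assertion.

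For part (b), I would first treat a single character. By Lemma~\ref{lem:FB-Eber}, the hypothesis that $c^{}_{f} (k)$ exists (here it exists and equals $0$) implies that $\chi^{}_{k} \circledast f$ exists and satisfies $\bigl( \chi^{}_{k} \circledast f \bigr) (x) = \chi^{}_{k} (x) \, c^{}_{f} (k) = 0$ for all $x \in \RR^d$. Writing a general trigonometric polynomial as $P = \sum_{j=1}^{n} \alpha^{}_{j} \, \chi^{}_{k_j}$, I would then apply Corollary~\ref{coro:P-coeff}, whose hypotheses hold because all the $c^{}_{f} (k_j)$ exist, to conclude
\[
   P \circledast f \, = \sum_{j=1}^{n} \alpha^{}_{j} \, c^{}_{f} (k_j) \, \chi^{}_{k_j} \, = \, 0 \ts ,
\]
since each $c^{}_{f} (k_j)$ vanishes. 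This establishes the claim.

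I do not expect a genuine obstacle here, as both parts are formal consequences of results already in hand; the only point worth flagging is that the null-FB-spectrum hypothesis is exactly strong enough, in that it includes the \emph{existence} of the coefficients and not merely their vanishing, so that the existence clauses in Lemmas~\ref{lem:FB-rel} and~\ref{lem:FB-Eber}, and hence in Corollary~\ref{coro:P-coeff}, apply verbatim without any additional verification.
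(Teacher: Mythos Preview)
Your proposal is correct and matches the paper's own treatment: the corollary is stated as an immediate consequence of Lemma~\ref{lem:FB-Eber} and Lemma~\ref{lem:FB-rel}, with no further proof given. Your write-up simply spells out the obvious application of these (together with Corollary~\ref{coro:P-coeff}, itself a direct consequence of Lemma~\ref{lem:FB-Eber}), exactly as intended.
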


\medskip

At this point, we can prove the central tool for our later
computations, where $\delta^{}_{S}$ is again the {Dirac comb} of a
point set $S\subset \RR^d$. For generalisations of this
result, we refer to \cite{S}.

\begin{theorem}\label{thm:FB-trig}
  Let\/ $\cA$ be a symmetric van Hove averaging sequence in\/
  $\RR^d \nts$, consider a measure\/ $\mu \in \cM^{\infty} (\RR^d)$,
  and assume that\/ $\mu$ has null FB spectrum with respect to\/
  $\cA$. Further, let\/ $\vL$ be a regular model set in\/
  $\RR^d$. Then, the Eberlein convolution\/
  $\mu \circledast \delta^{}_{\!\vL}$ exists along\/ $\cA$ and
\[
     \mu \circledast \delta^{}_{\!\vL} \,  = \, 0 \ts .
\]
\end{theorem}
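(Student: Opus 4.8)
The plan is to reduce the statement about the measure $\mu \circledast \delta^{}_{\!\vL}$ to one about Eberlein convolutions of ordinary functions, where the preceding corollaries apply directly. First I would record the elementary \emph{smearing identity} for the Eberlein convolution: for $\varphi, \psi \in \Cc (\RR^d)$ one has
\[
  \varphi * \bigl( \mu \circledast \delta^{}_{\!\vL} \bigr) * \psi
  \, = \, (\varphi * \mu) \circledast (\psi * \delta^{}_{\!\vL}) \ts ,
\]
which follows by moving each test function into the corresponding restricted factor $\mu|^{}_{A_n}$ or $\delta^{}_{\!\vL}|^{}_{A_n}$ and controlling the resulting boundary mismatch with the van Hove property, exactly as in the proof of Lemma~\ref{lem:FB-rel}; symmetry of $\cA$ guarantees commutativity, so the order of operations is immaterial. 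Writing $f \defeq \varphi * \mu$ and $g \defeq \psi * \delta^{}_{\!\vL}$, both in $\Cu (\RR^d)$, the right-hand side becomes the function-level Eberlein convolution $f \circledast g$, and the task is to show it vanishes for all choices of $\varphi$ and $\psi$.

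The two hypotheses now enter. On the one hand, since $\mu$ has null FB spectrum, Corollary~\ref{cor:FB-null}(a) shows that $f = \varphi * \mu$ has null FB spectrum as well. On the other hand, because $\vL$ is a \emph{regular} model set, its Dirac comb $\delta^{}_{\!\vL}$ is strongly almost periodic, so that $g = \psi * \delta^{}_{\!\vL}$ is a Bohr almost periodic function; consequently there are trigonometric polynomials $P_m$ with $\| P_m - g \|_{\infty} \to 0$. By Corollary~\ref{cor:FB-null}(b), the null FB spectrum of $f$ gives $P_m \circledast f = 0$ for every $m$, hence $f \circledast P_m = 0$ by commutativity. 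A short continuity estimate then closes this part: since the averaged integrals obey
\[
  \bigl| \bigl( f \circledast (g - P_m) \bigr)(x) \bigr|
  \, \leqslant \, \| f \|_{\infty} \, \| g - P_m \|_{\infty}
\]
uniformly in $x$, one concludes $f \circledast g = \lim_m f \circledast P_m = 0$, and therefore $\varphi * (\mu \circledast \delta^{}_{\!\vL}) * \psi = 0$ for all $\varphi, \psi \in \Cc (\RR^d)$.

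To conclude, I would invoke a standard density argument: a translation-bounded measure annihilated by two-sided convolution with all $\varphi, \psi \in \Cc (\RR^d)$ must vanish, as one sees by letting $\psi$ run through an approximate identity, so that $\varphi * (\mu \circledast \delta^{}_{\!\vL}) \to \varphi * (\mu \circledast \delta^{}_{\!\vL})$ is forced to be $0$, and then likewise for $\varphi$. The same smearing bookkeeping also shows that the defining vague limit exists in the first place, since each smeared limit exists and equals zero. The main obstacle is the structural input that a regular model set has a strongly almost periodic Dirac comb, so that $\psi * \delta^{}_{\!\vL}$ is uniformly approximable by trigonometric polynomials; this is precisely where regularity of the window is used, and where the argument would fail for a generic point set. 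A secondary point to handle with care is the rigorous justification of the smearing identity and of the existence of the intermediate Eberlein limits, which rests throughout on the van Hove property of $\cA$.
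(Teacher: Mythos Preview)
Your argument hinges on the claim that, for a regular model set $\vL$, the Dirac comb $\delta^{}_{\!\vL}$ is strongly almost periodic, so that $g = \psi * \delta^{}_{\!\vL}$ is Bohr almost periodic and hence uniformly approximable by trigonometric polynomials. This is false whenever $\vL$ is aperiodic. For a Delone set, strong almost periodicity of $\delta^{}_{\!\vL}$ forces $\vL$ to be crystallographic: take a non-negative $\psi$ supported in a ball smaller than the packing radius; any $\varepsilon$-almost period of $\psi * \delta^{}_{\!\vL}$ with $\varepsilon < \|\psi\|_\infty$ must move every point of $\vL$ to within a fixed small distance of another point of $\vL$, and relative density of such periods then yields full periodicity. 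So for the Fibonacci set, say, $\psi * \delta^{}_{\!\vL}$ is \emph{not} uniformly approximable by trigonometric polynomials, and your limit $f \circledast g = \lim_m f \circledast P_m$ cannot be justified via $\|g - P_m\|_\infty \to 0$.

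This is exactly the difficulty the paper's proof is designed to overcome. There, one writes $\delta^{}_{\!\vL} = \omega^{}_{1_W}$ and first replaces it by $\nu = \omega^{}_h$ with a \emph{continuous} compactly supported $h \geqslant 1^{}_W$; continuity of $h$ is what makes $\nu$ genuinely strongly almost periodic, so $\psi * \nu$ \emph{can} be uniformly approximated by a trigonometric polynomial $P$, and your $f \circledast P = 0$ step then goes through for this piece. The remaining error $\psi * (\delta^{}_{\!\vL} - \nu)$ is \emph{not} small in sup norm --- it is of order $\|\psi\|_\infty$ at points near $\partial W$ --- but it is small in density because $\int_H (h - 1^{}_W)$ is small. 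Turning that density smallness into a bound on the averaged integral against $\varphi * \mu$ requires the uniform distribution property of regular model sets (Moody's theorem), which is the substantive content of the $T_1$ estimate in the paper. In short, regularity of the window enters not to make $\delta^{}_{\!\vL}$ strongly almost periodic, but to control the mean of $\omega^{}_h - \delta^{}_{\!\vL}$; your proposal skips precisely this step.
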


\begin{proof}
  Let $( \RR^d , H, \cL)$ be the CPS for the description of the model
  set, where $H$ is a compactly generated LCAG and $\cL$ is a lattice
  in $\RR^d \times H$, that is, a co-compact discrete subgroup. Note
  that we need $H$ in this generality because the internal space need
  not be Euclidean; we refer to \cite{Moo97,Bob-rev} and
  \cite[Sec.~7.2]{TAO1} for general background, and to \cite{BM,NS11}
  for a detailed description of the explicit construction of the CPS.
  Further, let $W\subseteq H$ be the window that gives
\[
    \vL \, = \, \oplam(W) \, = \,
     \{ x \in \pi (\cL) : x^{\star} \in W \}
\]
as a regular model set, where
$\pi \colon \RR^d \times H \xrightarrow{\quad} \RR^d$ is the canonical
projection and $\star$ denotes the star map of the CPS.

Now, since $\mu$ and $\delta^{}_{\!\vL}$ are translation-bounded
measures by construction, the set
\[
    \bigg\{ \frac{\mu |^{}_{A_n} \! * \delta^{}_{\!\vL}|^{}_{A_n} }
             {\vol (A_n)} : n \in \NN \bigg\}
\]
is pre-compact in the vague topology and metrisable \cite{BL}.
Therefore, to prove that $\mu \circledast \delta^{}_{\!\vL} =0$,
it suffices to show that $0$ is the only cluster point of this set
(or sequence).

Let $\eta$ be any cluster point of this sequence, and let
$\cB= ( B_n )^{}_{n\in\NN}$ be a subsequence of $\cA$ with respect
to which $\eta$ is a limit, which means that $B_n =A_{\ell_n}$ with
$\ell_{n+1} > \ell_n$ for all $n\in\NN$. Clearly, $\cB$ is again a
symmetric van Hove averaging sequence, and one has
\begin{equation}\label{eq:conv-approx}
   \eta \, = \lim_{n\to\infty} \frac{\mu |^{}_{B_n} \! *
        \delta^{}_{\!\vL}|^{}_{B_n}}{\vol (B_n)}
   \, = \lim_{n\to\infty} \frac{\mu |^{}_{B_n} \! *
        \delta^{}_{\!\vL}}{\vol (B_n)} \ts ,
\end{equation}
where the second equality follows from the van Hove property of $\cB$
via Schlottmann's lemma \cite[Lemma~1.2]{Martin}; see also
\cite{BL,LSS}.

Now, fix $\varphi, \psi \in \Cc (\RR^d)$, set $K:=\supp(\varphi)$, and
note that the translation-boundedness of $\mu$ implies
$\| \varphi * \mu \|^{}_\infty <\infty$.  Let $\eps >0$, and select a
function $h \in \Cc (H)$ that satisfies $1^{}_W \leqslant h$ together
with
\begin{equation}\label{eq:eps-1}
      \int_H \bigl( h(t)-1^{}_W (t)\bigr) \dd t \, < \,
      \frac{\eps}{1 + 2 \dens(\cL) \ts \| \varphi*\mu \|^{}_{\infty}
       \int_{\RR^d} \lvert \psi (t) \rvert \dd t } \ts ,
\end{equation}
which is clearly possible. Further, with
$L = \pi (\cL) \subset \RR^d$, let
\[
    \nu \, = \,  \omega^{}_{h} \, \defeq
    \sum_{ x \in L} h (x^\star) \, \delta_x  \ts ,
\]
which is a strongly almost periodic measure, $\nu \in \SAP(\RR^d)$, by
\cite[Thm.~5.5.2]{NS11}; see also \cite[Thm.~3.1]{LR}.  Consequently,
$\psi * \nu $ is a Bohr (or uniformly) almost periodic function. This
implies that there exists a trigonometric polynomial $P$ in $d$
variables such that
\begin{equation}\label{eq:P-def}
   \| \psi * \nu - P \|^{}_{\infty}  \, < \,
     \frac{\eps}{1 + 2 \ts \| \varphi*\mu \|^{}_{\infty}} \ts .
\end{equation}

In view of Eq.~\eqref{eq:conv-approx}, we now have
\begin{align*}
    \bigl( \varphi*\psi*\eta \bigr) (0) \, & =
    \lim_{n\to\infty} \myfrac{1}{\vol (B_n)} \int_{\RR^d}
        \bigl( \varphi*\psi \bigr) (-s)
       \dd \bigl( \mu |^{}_{B^{}_{n}}* \delta^{}_{\!\vL}\bigr) (s)  \\[2mm]
   &= \lim_{n\to\infty} \myfrac{1}{\vol (B_n)} \int_{\RR^d} \int_{\RR^d}
     \bigl( \varphi*\psi \bigr) (-x-y) \dd \mu |^{}_{B^{}_{n}}(x)
       \dd  \delta^{}_{\!\vL} (y)  \\[2mm]
   &= \lim_{n\to\infty} \myfrac{1}{\vol (B_n)} \int_{\RR^d}
       \int_{\RR^d} \int_{\RR^d}
      1^{}_{\nts B^{}_{n}}(x)\, \varphi(-x-y-s) \, \psi(s)
      \dd s \dd \mu(x) \dd  \delta^{}_{\!\vL} (y)  \ts ,
\intertext{which can be continued via the substitution $r = y+s$ and
    Fubini's theorem as}
     \bigl( \varphi*\psi*\eta \bigr) (0) \,
   & = \lim_{n\to\infty} \myfrac{1}{\vol (B_n)} \int_{\RR^d}
     \int_{\RR^d} \int_{\RR^d}
       1^{}_{\nts B^{}_{n}}(x)\, \varphi(-x-r)\, \psi(r-y) \dd r \dd \mu (x)
         \dd  \delta^{}_{\!\vL} (y)  \\[2mm]
   & = \lim_{n\to\infty} \myfrac{1}{\vol (B_n)} \int_{\RR^d}
        \int_{\RR^d} \int_{\RR^d}
     1^{}_{\nts B^{}_{n}}(x)\, \varphi(-x-r)\,  \psi(r-y) \dd
     \delta^{}_{\!\vL}(y)
          \dd r \dd \mu (x)   \\[2mm]
   & = \lim_{n\to\infty} \myfrac{1}{\vol (B_n)} \int_{\RR^d} \int_{\RR^d}
     1^{}_{\nts B^{}_{n}}(x) \, \varphi(-x-r)
     \bigl( \psi*\delta^{}_{\!\vL} \bigr) (r)
     \dd \mu (x) \dd r   \ts .
\end{align*}
Here, we observe that  $1^{}_{\nts B_n}(x) \ts \varphi(-x-r) =
     1^{}_{\nts B_n}(r) \ts \varphi(-x-r)$ holds for $x \not\in \partial^{K}
     \! B_n$. Due to the
van Hove property of $\cB$, we have $\lim_{n\to\infty} \vol(\partial^K \! B_n)
/\nts \vol (B_n) = 0$, so we get

\begin{align}
   \bigl( \varphi*\psi * \eta \bigr) (0) \, & =
     \lim_{n\to\infty} \myfrac{1}{\vol (B_n)} \int_{\RR^d}
      1^{}_{\nts B^{}_{n}}(r)\, \bigl( \psi*\delta^{}_{\!\vL}\bigr) (r)
      \int_{\RR^d}  \varphi(-x-r) \dd \mu(x)\dd r  \notag \\[2mm]
   & = \lim_{n\to\infty} \myfrac{1}{\vol (B_n)} \int_{\RR^d}
      1^{}_{\nts B^{}_{n}}(r)\, \bigl( \psi*\delta^{}_{\!\vL} \bigr) (r)
       \, \bigl( \varphi*\mu \bigr) (-r)   \dd r  \notag \\[2mm]
   & = \lim_{n\to\infty} \myfrac{1}{\vol (B_n)} \int_{B^{}_{n}}
     \bigl( \psi*\delta^{}_{\!\vL} \bigr) (r) \,
     \bigl( \varphi*\mu \bigr) (-r)
         \dd r \ts .  \notag
\intertext{Therefore, employing a standard $3\ts \eps$-strategy, we find}
   \big| \bigl( \varphi*\psi * \eta \bigr) (0) \big| \, &
  = \lim_{n\to\infty} \myfrac{1}{\vol (B_n)} \left| \int_{B^{}_{n}}
      \bigl( \psi*\delta^{}_{\!\vL} \bigr) (r) \,
      \bigl( \varphi*\mu \bigr) (-r)   \dd r  \ts \right| \nonumber \\[2mm]
  & \leqslant \, \limsup_{n\to\infty} \myfrac{1}{\vol (B_n)} \left|
      \int_{B^{}_{n}}  \bigl(\psi*\delta^{}_{\!\vL}-\psi*\nu \bigr) (r)
    \, \bigl( \varphi*\mu \bigr) (-r)   \dd r  \ts \right|
    \label{EQ1} \\[2mm]
  & \quad \, + \limsup_{n\to\infty} \myfrac{1}{\vol (B_n)}
       \left| \int_{B^{}_{n}}  \bigl(\psi*\nu-P \bigr) (r)
       \, \bigl( \varphi*\mu\bigr) (-r)   \dd r \ts  \right| \nonumber \\[2mm]
  & \quad \, + \limsup_{n\to\infty} \myfrac{1}{\vol (B_n)}
       \left| \int_{B^{}_{n}}  \! P (r) \, \bigl( \varphi*\mu \bigr) (-r)
        \dd r  \ts \right| ,  \nonumber
\end{align}
where $P$ is the trigonometric polynomial from \eqref{eq:P-def}.

Now, we need to estimate the three terms in the last expression of
\eqref{EQ1}, where we begin with the middle one. Here, for all
$n \in \NN$, we recall \eqref{eq:P-def} and obtain
\[
  \myfrac{1}{\vol (B_n)} \left|\int_{B^{}_{n}}
     \bigl( \psi*\nu-P \bigr) (r)\,  \bigl( \varphi*\mu \bigr) (-r)
       \dd r  \ts \right|
   \, \leqslant  \, \|  \psi*\nu-P \|^{}_{\infty} \,
       \| \varphi*\mu \|^{}_{\infty}  \, < \, \myfrac{\eps}{2}  \ts ,
\]
which then gives
\[
   \limsup_{n\to\infty} \myfrac{1}{\vol (B_n)} \left|
      \int_{B^{}_{n}}  \bigl( \psi*\nu - P \bigr) (r) \,
      \bigl(  \varphi*\mu \bigr) (-r)   \dd r \ts \right|
      \,  \leqslant \, \myfrac{\eps}{2}  \ts .
\]
Next, since $\mu$ has null FB spectrum relative to $\cA$ by
assumption, hence clearly also relative to the subsequence $\cB$,
Corollary~\ref{cor:FB-null} implies
\[
    \limsup_{n\to\infty} \myfrac{1}{\vol (B_n)}
    \left| \int_{B^{}_{n}}  P (r) \, \bigl(  \varphi*\mu \bigr) (-r)
    \dd r  \ts \right| \, = \, \big| P \stackrel{\cB}{\circledast}
     ( \varphi*\mu) \big| (0) \, = \,  0 \ts .
\]

The remaining term is a little harder. Here, we have
{\allowdisplaybreaks
\begin{align*}
    T^{}_{1} \, & \defeq \, \limsup_{n\to\infty} \,
       \myfrac{1}{\vol (B_n)} \left| \int_{B^{}_{n}}
       \bigl(\psi*\delta^{}_{\!\vL} \nts -\psi*\nu \bigr) (r)\,
       \bigl( \varphi*\mu \bigr) (-r)   \dd r  \ts \right|  \\[2mm]
  &\, \leqslant \, \limsup_{n\to\infty} \myfrac{1}{\vol (B_n)} \int_{B^{}_{n}}
      \nts \big| \psi*\delta^{}_{\!\vL} \nts -\psi*\nu \big|  (r) \,
       \| \varphi*\mu \|^{}_{\infty}  \dd r    \\[2mm]
  &\, \leqslant \, \limsup_{n\to\infty}
     \frac{\| \varphi*\mu \|^{}_{\infty}}{\vol (B_n)} \int_{B^{}_{n}}
    \bigl( \lvert \psi \rvert * \lvert \delta^{}_{\!\vL}
    \nts - \nu \rvert \bigr)
            (r)  \dd r  \\[2mm]
  & \, = \, \limsup_{n\to\infty}
     \frac{\| \varphi*\mu \|^{}_{\infty}}{\vol (B_n)} \int_{\RR^d} \int_{\RR^d}
    \! 1^{}_{\nts B^{}_{n}} (r+s) \dd \ts \lvert \delta^{}_{\!\vL}
    \nts - \nu \rvert (s)
        \, \lvert \psi \rvert (r) \dd r \\[2mm]
  & \, = \, \limsup_{n\to\infty}
     \frac{\| \varphi*\mu \|^{}_{\infty}}{\vol (B_n)} \int_{\RR^d}
        \bigl( \nu - \delta^{}_{\!\vL} \bigr) (B^{}_{n} \nts  - r)
        \, \lvert \psi \rvert (r) \dd r \\[2mm]
  & \, \leqslant \, \| \varphi * \mu \|^{}_{\infty} \int_{\RR^d}
        \lvert \psi \rvert (r) \dd r
        \; \limsup_{n\to\infty}
        \frac{\sup_{t \in \RR^d}  \bigl( \nu -  \delta^{}_{\!\vL} \bigr)
            (B_n \nts - t)}{\vol (B_n)}   \ts ,
\end{align*}
where, in the second-last line, we have used the fact that
$\nu - \delta^{}_{\!\vL}$ is a positive measure. The crucial
observation now is that, due to the model set structure with its
uniform distribution properties \cite{Bob-uniform}, the last term
satisfies
\[
     \limsup_{n\to\infty}
        \frac{\sup_{t \in \RR^d}  \bigl( \nu -  \delta^{}_{\!\vL} \bigr)
            (B_n \nts - t)}{\vol (B_n)}     \, = \,
         \dens(\cL) \int_H \bigl( h(t)-1^{}_W(t) \bigr) \dd t
\]
which implies $T_1 < \frac{\eps}{2}$ via \eqref{eq:eps-1}  as
required. }%(end of allowdisplaybreaks)

It now follows from Eq.~\eqref{EQ1} that
$\left| \bigl(\varphi*\psi * \eta \bigr) (0) \right| < \eps $ holds
for all $\eps > 0$. This implies
\[
   \eta \bigl( (I \nts . \ts \varphi) * (I \nts . \ts \psi ) \bigr)
   \, = \,  \eta \bigl( I \nts . (\varphi * \psi) \bigr)
    \, = \,  \bigl( \varphi*\psi * \eta \bigr) (0) \, = \, 0 \ts ,
\]
where $\bigl( I \nts . g\bigr) (x) \defeq g (-x)$ for any (continuous)
function $g$. Since $\varphi, \psi \in \Cc (\RR^d)$ were arbitrary, we
see that $\eta=0$ holds on the subset
\[
    K^{}_{2} (\RR^d) \, \defeq \, \mathrm{span}
       \big\{ f * g :  f, g \in \Cc (\RR^d) \big\} \ts ,
\]
which is dense in $\Cc (\RR^d)$ by a standard approximate identity
argument. Consequently, $\eta=0$, and no other cluster point can
exist.
\end{proof}

Note that Theorem~\ref{thm:FB-trig} remains true if the symmetry
assumption on $\cA$ is lifted. The proof remains unchanged, except for
replacing $B_n$ by $-B_n$ from Eq.~\eqref{eq:conv-approx} onwards, but
commutativity of $\circledast$ is no longer implied; compare
Remark~\ref{rem:commute}.

\section{Eberlein splitting and decomposition for PV
   inflations}\label{sec:decomp}

 Let $\vL^{}_{1} , \ldots , \vL^{}_{\nts N}$ denote pairwise disjoint
 point sets in $\RR^d$ and consider
 $\vL \defeq \ts \dot\bigcup_{i}\, \vL_i$, which we call a \emph{typed
   point set}.  Let us assume that $\vL$ is a Delone set with nice
 averaging properties, including the types. In particular, given some
 symmetric van Hove averaging sequence $\cA$, we are interested in the
 situation that the \emph{pair correlation measures}
\[
      \gamma^{}_{ij} \, \defeq \,  \widetilde{\delta^{}_{\!\vL_i}}
        \nts \circledast \ts \delta^{}_{\!\vL_j}
\]
with respect to $\cA$ exist for all $1 \leqslant i,j \leqslant N$,
where $\widetilde{\mu}$ denotes the (possibly complex) Radon measure
defined by $\widetilde{\mu} (g) = \overline{\mu (\widetilde{g}\ts ) }$
with $g\in \Cc (\RR^d)$ and $\widetilde{g} (x) = \overline{ g(
  -x)}$. For instance, this property is guaranteed when the $\vL_i$
are regular model sets in the same CPS, or when they emerge as the
control points of a primitive inflation rule with $N$ prototiles; see
\cite{TAO1,BG} for background and various details, and \cite{MR} for
extensions. Below, we mainly consider the case $d=1$, though the
setting is general enough to cover higher dimensions as well.

\begin{theorem}\label{thm:decomp}
  Let\/ $\vL= \ts\dot\bigcup_{1\leqslant i \leqslant N}\, \vL_i$ be a
  typed point set generated from of a primitive PV inflation rule in
  one dimension that is aperiodic and has a PV unit as inflation
  factor, and consider the corresponding natural CPS\/
  $(\RR, \RR^m , \cL)$ that emerges via the classic Minkowski
  embedding{\ts}\footnote{The underlying construction is explained in
    full detail in \cite[Sec.~3.4]{TAO1}.}  of the module spanned by
  the points.  Let\/ $W_i$ be the attractors of the induced,
  contractive iterated function system for the windows in internal
  space, and set
\[
  \alpha^{}_i \, \defeq \, \frac{\dens(\vL^{}_{\ts i})}{\dens (\cL)
    \vol(W^{}_{\! i})} \, , \quad
  \omega^{}_i \, \defeq \, \alpha^{}_i \, \delta^{}_{\!\smoplam(W_i)}
    \, , \quad \text{and} \quad
    \nu^{}_i \, \defeq \, \delta^{}_{\!\vL_i}  \! - \ts \omega^{}_i \ts .
\]

Then, for any symmetric van Hove averaging sequence\/ $\cA$ and all\/
$1 \leqslant i,j \leqslant N$, one has the splitting\/
$\delta^{}_{\!\vL_i} = \omega^{}_i \nts + \nu^{}_i$ together with the
decomposition and orthogonality relations
\begin{align*}
   \bigl( \gamma^{}_{ij} \bigr)_{\mathsf{s}} \, = \,
       \widetilde{\omega^{}_i} \circledast {\omega^{}_j}  \, , \quad
   \bigl( \gamma^{}_{ij} \bigr)_{0} \, = \, \widetilde{\nu^{}_i}
    \circledast  {\nu^{}_j} \, , \quad \text{and}
    \quad \widetilde{\omega^{}_i} \circledast {\nu^{}_j} \, =
  \, \widetilde{\nu^{}_{i}} \circledast  \omega^{}_j
  \: = \, 0 \ts ,
\end{align*}
where\/ $\gamma^{}_{ij} =  \bigl( \gamma^{}_{ij} \bigr)_{\mathsf{s}}
+  \bigl( \gamma^{}_{ij} \bigr)_{\mathsf{0}}$ is the unique Eberlein
decomposition of the pair correlation measures into their strongly
almost periodic and their null-weakly almost periodic components.
\end{theorem}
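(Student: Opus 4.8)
The splitting $\delta^{}_{\vL_i} = \omega^{}_i + \nu^{}_i$ holds by the very definition of $\nu^{}_i$, so the substance of the theorem lies in the three convolution identities, and everything hinges on one structural fact: that each $\nu^{}_i$ has null FB spectrum with respect to $\cA$. The plan is to establish this first. Since the inflation factor is a PV unit and the rule is primitive and aperiodic, the star-map images $\vL^{\star}_{i}$ are uniformly distributed in the attractor window $W^{}_{\!i}$; this is the genuine PV input, available from \cite[Thm.~5.3]{BG} via the standard Weyl-type equidistribution machinery. Uniform distribution forces the exponential averages defining $c^{}_{\delta_{\vL_i}}(k)$ to agree, for every $k\in\RR$, with $\alpha^{}_i$ times those defining $c^{}_{\delta_{\smoplam(W_i)}}(k)$, the normalising constant $\alpha^{}_i = \dens(\vL^{}_{i})/\bigl(\dens(\cL)\ts\vol(W^{}_{\!i})\bigr)$ being calibrated precisely so that the $k=0$ amplitude (the density) and, in fact, every Bragg amplitude cancels. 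Hence $c^{}_{\nu_i}(k) = c^{}_{\delta_{\vL_i}}(k) - \alpha^{}_i\, c^{}_{\delta_{\smoplam(W_i)}}(k) = 0$ for all $k$, exactly as in the Fibonacci model of Section~\ref{sec:Fibo}. I expect this equidistribution step to be the main obstacle, as it is the only place where the PV hypothesis genuinely enters.

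With null FB spectrum in hand, the orthogonality relations follow quickly from Theorem~\ref{thm:FB-trig}. First I would record that $\widetilde{\nu_i}$ again has null FB spectrum, using the identity $c^{}_{\widetilde{\mu}}(k) = \overline{c^{}_{\mu}(k)}$ valid for symmetric $\cA$, and that $\omega^{}_i = \alpha^{}_i\,\delta_{\smoplam(W_i)}$ is strongly almost periodic because $\smoplam(W^{}_{\!i})$ is a regular model set, with $\widetilde{\omega_i} = \alpha^{}_i\,\delta_{\smoplam(-W_i)}$ a real scalar multiple of another regular model set comb (here one uses $\widetilde{\delta_S} = \delta_{-S}$ and the linearity of the star map). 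Now Theorem~\ref{thm:FB-trig} applies on the nose: it gives $\widetilde{\nu_i}\circledast\delta_{\smoplam(W_j)} = 0$, whence $\widetilde{\nu_i}\circledast\omega^{}_j = 0$; and, using commutativity of $\circledast$ for symmetric $\cA$, one finds $\widetilde{\omega_i}\circledast\nu^{}_j = \nu^{}_j\circledast\widetilde{\omega_i} = \alpha^{}_i\,\nu^{}_j\circledast\delta_{\smoplam(-W_i)} = 0$. This is the third asserted identity.

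For the decomposition itself, I would expand the pair correlation bilinearly. Writing $\widetilde{\delta_{\vL_i}} = \widetilde{\omega_i} + \widetilde{\nu_i}$ and distributing the Eberlein convolution gives
\[
 \gamma^{}_{ij} \, = \, \widetilde{\omega_i}\circledast\omega^{}_j + \widetilde{\omega_i}\circledast\nu^{}_j + \widetilde{\nu_i}\circledast\omega^{}_j + \widetilde{\nu_i}\circledast\nu^{}_j \, = \, \widetilde{\omega_i}\circledast\omega^{}_j + \widetilde{\nu_i}\circledast\nu^{}_j \ts ,
\]
the two cross terms having just been shown to vanish; in particular, existence of $\widetilde{\nu_i}\circledast\nu^{}_j$ follows by subtraction, since $\gamma^{}_{ij}$ and $\widetilde{\omega_i}\circledast\omega^{}_j$ both exist. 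The first summand is strongly almost periodic, as the cross-correlation of two regular model set combs in the same CPS is a pure point (hence strongly almost periodic) measure. The second summand has null FB spectrum: by multiplicativity of the FB coefficients under Eberlein convolution, which is standard in this calculus and consistent with Lemma~\ref{lem:FB-rel}, one has $c^{}_{\widetilde{\nu_i}\circledast\nu_j}(k) = \overline{c^{}_{\nu_i}(k)}\,c^{}_{\nu_j}(k) = 0$ for every $k$, so it carries no Bragg component and is therefore null-weakly almost periodic. Since $\gamma^{}_{ij}$ is weakly almost periodic, the displayed equation exhibits it as a sum of a strongly almost periodic and a null-weakly almost periodic measure; uniqueness of the Eberlein decomposition then forces $\bigl(\gamma^{}_{ij}\bigr)_{\mathsf{s}} = \widetilde{\omega_i}\circledast\omega^{}_j$ and $\bigl(\gamma^{}_{ij}\bigr)_{0} = \widetilde{\nu_i}\circledast\nu^{}_j$, which completes the proof.
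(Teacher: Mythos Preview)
Your overall architecture matches the paper's proof: establish $c_{\nu_i}(k)=0$ via \cite[Thm.~5.3]{BG}, apply Theorem~\ref{thm:FB-trig} for the cross terms, expand bilinearly, identify $\widetilde{\omega_i}\circledast\omega_j$ as strongly almost periodic, and finish by uniqueness of the Eberlein decomposition. The one substantive gap is your justification that $\widetilde{\nu_i}\circledast\nu_j\in\WAP_0(\RR)$.

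You assert a ``multiplicativity of FB coefficients under Eberlein convolution'', namely $c_{\widetilde{\nu_i}\circledast\nu_j}(k)=\overline{c_{\nu_i}(k)}\,c_{\nu_j}(k)$, and cite Lemma~\ref{lem:FB-rel} for support. But Lemma~\ref{lem:FB-rel} concerns the ordinary convolution $\varphi*\mu$ with $\varphi\in\Cc(\RR^d)$, not the Eberlein convolution of two translation-bounded measures; it says nothing about $c_{\mu\circledast\nu}$. The identity you want is not a general fact of the calculus: for an autocorrelation $\gamma=\widetilde{\mu}\circledast\mu$ it amounts precisely to $\widehat{\gamma}(\{k\})=\lvert c_\mu(k)\rvert^2$, which is the \emph{consistent phase property} (Bombieri--Taylor). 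This is known to fail without extra hypotheses and is a genuine theorem for the systems at hand, proved in \cite{Lenz} and extended to the multi-component inflation setting in \cite{BGM}. The paper invokes exactly this: it uses consistent phase to show $\widehat{\gamma_{ij}}(\{k\})=\overline{A_{\vL_i}(k)}\,A_{\vL_j}(k)=\widehat{\widetilde{\omega_i}\circledast\omega_j}\,(\{k\})$, whence $\bigl(\widehat{\widetilde{\nu_i}\circledast\nu_j}\bigr)_{\mathsf{pp}}=0$ and thus $\widetilde{\nu_i}\circledast\nu_j\in\WAP_0(\RR)$. Your argument becomes correct once you replace the hand-waved ``multiplicativity'' by an explicit appeal to the consistent phase property from \cite{Lenz,BGM}; without it, the step does not go through.
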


\begin{proof}
  When the inflation factor is a unit, the induced CPS has a Euclidean
  internal space, that is, we have $H = \RR^m$ with $m\geqslant 1$,
  where the latter is a consequence of the assumed aperiodicity. This
  is the situation fully analysed in \cite{BG}, with $\dens \bigl(
  \oplam (W_i) \bigr) \nts = \dens (\cL) \vol (W_i)$.

  Let $\cA$ be arbitrary, but fixed. First, by \cite[Thm.~5.3]{BG},
  the FB coefficients of $\nu^{}_{i}$ satisfy
\[
   c^{}_{\nu_i} (k) \, = \, 0
\]
for all $k \in \RR$. Then, by Theorem~\ref{thm:FB-trig}, we get
$ \widetilde{\omega^{}_i} \circledast {\nu^{}_j} =
\widetilde{\nu^{}_i} \circledast \ts {\omega^{}_j } = 0$ as claimed.

Next, since $\oplam(W^{}_{\! i})$ is a regular model set for each $i$,
in the same CPS, the Eberlein convolutions
$\widetilde{\omega^{}_i} \circledast \ts {\omega^{}_j}$ along $\cA$
exist and are strongly almost periodic measures, that is,
\[
  \widetilde{\omega^{}_i} \circledast {\omega^{}_j}
  \, \in \, \SAP(\RR) \ts .
\]
Likewise, as the $\vL_i$ emerge from a primitive inflation rule, the
pair correlation measures $\gamma^{}_{ij}$ exist and are weakly almost
periodic, compare \cite{BGM,LS2}, so
\[
      \gamma^{}_{ij} \, \in \, \WAP (\RR) \ts .
\]
Therefore, as all required limits exist, we obtain the pair
correlation measures \cite{BGM} as
\begin{align}
    \gamma^{}_{ij} \, & = \, \widetilde{\delta^{}_{\!\vL_i}} \circledast
    {\delta^{}_{\!\vL_j}} \, = \, \bigl(\widetilde{\omega^{}_i} +
     \ts \widetilde{\nu^{}_i}  \bigr) \circledast
    \bigl( {\omega^{}_j} + \ts {\nu^{}_j} \bigr) \nonumber \\[1mm]
   & = \, \widetilde{\omega^{}_i} \circledast {\omega^{}_j} \, + \,
      \widetilde{\omega^{}_i} \circledast {\nu^{}_j}  \, + \,
      \widetilde{\nu^{}_i} \circledast {\omega^{}_j} \, + \,
      \widetilde{\nu^{}_i} \circledast {\nu^{}_j} \nonumber \\[1mm]
   & = \, \widetilde{\omega^{}_i} \circledast {\omega^{}_j} \, + \,
      \widetilde{\nu^{}_i} \circledast {\nu^{}_j} \ts , \label{EQ2}
\end{align}
which implies that every $\widetilde{\nu^{}_{i}} \circledast \nu^{}_{j}$
is a weakly almost periodic measure. Moreover, all these measures are
Fourier transformable by \cite[Lemma~2.1]{BGM}, and we obtain
\[
   \widehat{\gamma^{}_{ij}} \, = \, \widehat{\widetilde{\omega^{}_i}
     \circledast {\omega^{}_j}} \, + \,
   \widehat{\widetilde{\nu^{}_i} \circledast
     {\nu^{}_j}} \ts .
\]

Our system satisfies the \emph{consistent phase property}, which
relates the (generalised) intensities with the Fourier--Bohr
coefficients of the system. This connection is also known as the
Bombieri--Taylor conjecture, and was proved in \cite{Lenz} for a
specific class of one-component systems and later generalised in
\cite{BGM} to the primitive inflation systems under consideration
here. For measures which are pure point diffractive, the consistent
phase property is equivalent to the Besicovitch almost periodicity of
the measure \cite[Thm.~3.36]{LSS2}.  Invoking the results from
\cite{BG}, a simple calculation now gives
\[
    \widehat{\gamma^{}_{ij}}(\{ k \}) \, = \,
     \overline{A^{}_{\!\vL_i} (k)} \ts {A^{}_{\!\vL_j}(k) }
     \, = \,  \widehat{\widetilde{\omega^{}_i}
    \circledast {\omega^{}_j}} ~ ( \{k \})
\]
for all $k\in\RR$, where the amplitudes satisfy
$A^{}_{\! \vL_j} (k) = c^{}_{k} (\delta^{}_{\! \vL_j})$.
But this implies
\[
    \bigl(  \widehat{ \widetilde{\nu^{}_i}  \circledast
       {\nu^{}_j}} \bigr)_{\mathsf{pp}} \, = \, 0 \ts ,
\]
which means that
$\widetilde{\nu^{}_i} \circledast {\nu^{}_j} \in \WAP^{}_0 (\RR)$.
Since
$\widetilde{\omega^{}_i} \circledast {\omega^{}_j} \in \SAP (\RR)$ and
$\widetilde{\nu^{}_i} \circledast {\nu^{}_j} \in \WAP^{}_0 (\RR)$, the
claim follows from Eq.~\eqref{EQ2} and the uniqueness of the Eberlein
decomposition \cite{MoSt}.
\end{proof}

Once again, the result of Theorem~\ref{thm:decomp} remains true
without the symmetry requirement for $\cA$. Indeed, observe that
\[
  \widetilde{ \widetilde{\omega^{}_{i}} \circledast \nu^{}_{\nts j} } \, = \,
  \widetilde{\nu^{}_{\nts j}} \circledast \ts \omega^{}_{i} \ts ,
\]
which is an easy consequence of
$\widetilde{\mu |^{}_{K}} = \widetilde{\mu} |^{}_{-K}$. Then,
Theorem~\ref{thm:FB-trig} still provides the two relations we need to
derive \eqref{EQ2}, though $\circledast$ is no longer implied to be
commutative.

\begin{remark}
  The orthogonality relations in Theorem~\ref{thm:decomp}, via inserting
  the definitions of the measures, also imply that
\[
    \alpha^{}_{i} \, \widetilde{\delta^{}_{\!\smoplam (W_i)}} \circledast
    \delta^{}_{\! \vL_j} \, = \; \alpha^{}_{j} \,
    \widetilde{\delta^{}_{\! \vL_i}} \! \circledast
    \delta^{}_{\!\smoplam (W_j)}
\]
  holds for all $1 \leqslant i,j \leqslant N$, where  $\alpha^{}_{i} =
  \dens (\vL_i) / \! \dens \bigl( \oplam (W_{\nts j})\bigr)$. This
  proportionality of measures matches nicely with the density formula
\begin{equation}\label{eq:polar}
     \widehat{\widetilde{\delta^{}_{\nts P}} \circledast
     \delta^{}_{Q}} ~ \bigl( \{ 0 \} \bigr) \, = \; \dens (P) \, \dens (Q)
\end{equation}
for Meyer sets $P$ and $Q$ that define uniquely ergodic Delone
dynamical systems. With $\mu = \delta^{}_{\nts P}$ and
$\nu = \delta^{}_{Q}$, this relation follows from the complex
polarisation identity
\[
    \widetilde{\mu} \circledast \nu \, = \, \myfrac{1}{4} \sum_{\ell=1}^{4}
     (-\ii)^{\ell} \bigl( \widetilde{\mu + \ii^{\ell} \nu} \bigr)
     \circledast \bigl( \mu + \ii^{\ell} \nu \bigr)  ,
\]
  where the right-hand side is a complex linear combination of four
  positive-definite measures. Then, under Fourier transform, one gets
\[
    \widehat{\widetilde{\mu} \circledast \nu}~ \bigl( \{ 0 \} \bigr) \, = \,
    \myfrac{1}{4} \sum_{\ell=1}^{4} \big| M (\mu) + \ii^{\ell} M ( \nu) \big|^2
\]
from the known intensity formula at $0$, where $M (.)$ is the
\emph{mean} of a measure; see \cite[Prop~9.2]{TAO1} and
\cite[Lemma~4.10.7]{MoSt}. Since $M (\mu) = \dens (P)$ and
$M (\nu) = \dens (Q)$, Eq.~\eqref{eq:polar} follows by expanding the
last sum. It seems interesting to further analyse these identities
under the geometric and topological constraints imposed by the
projection setting.  \exend
\end{remark}

The formulation of Theorem~\ref{thm:decomp} refers to PV inflations
with an inflation multiplier that is a \emph{unit}, because only this
case has been fully treated so far \cite{BG}. It is clear that the key
result, namely \cite[Thm.~5.3]{BG}, can be generalised to the non-unit
situation, and also to PV inflations in higher dimensions. Then, the
decomposition remains the same. Let us illustrate the non-unit
  case with our second guiding example as follows.

\begin{example}\label{ex:TM}
  The Thue{\ts}--Morse (TM) substitution
\[
    \varrho \colon \;  a \mapsto ab \ts , \quad b \mapsto ba
\]
has constant length $2$, and gives rise to a partition of the
integers, $\vL = \ZZ = \vL_a \ts \dot{\cup} \ts \vL_b$. Neither
$\vL_a$ nor $\vL_b$ is a model set, but our above approach is fully
applicable. Here, for the construction of the splitting, we need a CPS
with the $2$-adic integers as internal space, equipped with its
normalised Haar measure that is to be used for the window volumes.
Then, the analogue of the construction from Theorem~\ref{thm:decomp},
for instance with $\cA = \bigl( [-n,n]\bigr)_{n\in\NN}$, leads to
$\omega^{}_{a} = \omega^{}_{b} = \frac{1}{2} \ts \delta^{}_{\ZZ}$ and
thus to the signed measures
\[
  \nu^{}_{a} \, = \, \delta^{}_{\! \vL_a} \! -
  \myfrac{1}{2} \, \delta^{}_{\ZZ}
   \quad \text{and} \quad
   \nu^{}_{b} \, = \, - \nu^{}_{a} \ts .
\]
For $\alpha,\beta \in \{ a,b \}$, the corresponding pair correlation
measures are
\begin{equation}\label{eq:TM}
   \gamma^{}_{\alpha \beta} \, = \, \myfrac{1}{4} \, \delta^{}_{\ZZ}
   \ts + \ts \ts \widetilde{\nts\nu^{}_{\alpha}\nts} \circledast
    \nu^{}_{\beta} \, = \,  \myfrac{1}{4} \, \delta^{}_{\ZZ}
   + \myfrac{1}{4} \, \epsilon^{}_{\alpha \beta} \,
   \gamma_{_\mathrm{TM}} \ts ,
\end{equation}
with $\epsilon^{}_{\alpha \beta } = 1$ or $-1$ depending on whether
$\alpha$ equals $\beta$ or not, and with $\gamma_{_\mathrm{TM}}$
denoting the classic autocorrelation measure of the signed TM sequence
due to Mahler and Kakutani; see \cite[Sec.~10.1]{TAO1} and references
therein for details.

Indeed, \eqref{eq:TM} is the required Eberlein decompositon.  Here,
one has $\widehat{\delta^{}_{\ZZ}} = \delta^{}_{\ZZ}$ by the Poisson
summation formula \cite[Prop.~9.4]{TAO1}, see also \cite{RiStru},
while $\widehat{\gamma_{_\mathrm{TM}}}$ is a positive measure that is
purely singular continuous. It has the Riesz product representation
\[
    \widehat{\gamma_{_\mathrm{TM}}} \, = \prod_{\ell=0}^{\infty}
     \bigl( 1 - \cos \bigl(2^{\ell+1} \pi (.) \bigr) \bigr) ,
\]
to be interpreted as the vague limit of a sequence of absolutely
continuous measures; see \cite{Tanja} for a recent detailed analysis
of this measure.  \exend
\end{example}

The explicit extension of this type of analysis to tiling models in
higher dimensions, for instance in the spirit of \cite{BGM,BFG},
remains an interesting task for the near future, as does a further
splitting of the null-weakly almost periodic part, as in \cite{Nicu2},
according to the different continuous types after Fourier transform.

\section{Further directions}\label{sec:further}

It is natural to ask whether our constructive approach can be made to
work also beyond the PV inflation case, and thus perhaps add extra
insight into systems that are covered by \cite{Au}. In particular, it
is of interest to see the orthogonality in the Eberlein sense in more
generality, at least almost surely with respect to some given
invariant measure in a dynamical systems context. This is indeed
possible, and we outline this with two concrete cases.

\subsection{Eberlein splitting for a lattice gas}

Let $0<p<1$ be fixed, and consider the Bernoulli \emph{lattice gas} on
$\ZZ$ with independent occupation probability $p$ for the single
sites. A realisation of this process can thus either be seen as a
configuration, meaning an element of $\{ 0 ,1 \}^{\ZZ}$, or as a
subset $\vL \subseteq \ZZ$.  Let us take the latter view, and select a
typical subset $\vL$, thus with density $p$. By \cite[Ex.~11.2]{TAO1},
the corresponding autocorrelation measure is
\[
    \gamma \, = \, p^2 \ts \delta^{}_{\ZZ} + p (1-p)\ts \delta^{}_{0}
    \, = \, (\gamma)^{}_{\mathsf{s}} + (\gamma)^{}_{0} \ts ,
\]
which applies to almost all realisations of the Bernoulli process.
This can easily be proved with the strong law of large numbers, see
\cite{BM-ran} for a detailed exposition, and provides the Eberlein
decomposition of $\gamma$ into its strongly almost periodic and its
null-weakly almost periodic part. Indeed, the diffraction measure then
simply is
$\widehat{\gamma} = p^2 \ts \delta^{}_{\ZZ} + p (1-p)
\lambda^{}_{\mathrm{L}}$, where $\lambda^{}_{\mathrm{L}}$ denotes
Lebesgue measure and $\widehat{\gamma}$ applies to almost all
realisations.

Now, in analogy to our above splitting, we set
\begin{equation}\label{eq:Bern-split}
     \delta^{}_{\! \vL} \, = \, \omega + \nu
     \qquad \text{with }
     \; \omega \, = \, p \, \delta^{}_{\ZZ}
     \; \text{ and } \; \nu \, = \, \delta^{}_{\!\vL} - \omega \ts .
\end{equation}
Using $\cA = \bigl( [-n,n]\bigr)_{n\in\NN}$ as before, we get
$\delta^{}_{\ZZ} \circledast \delta^{}_{\ZZ} = \delta^{}_{\ZZ}$ from
\cite[Ex.~8.10]{TAO1}, hence
\[
    \omega \circledast \widetilde{\omega} \, = \,
    \omega \circledast \omega \, = \, p^2 \ts \delta^{}_{\ZZ}
    \, = \, (\gamma)^{}_{\mathsf{s}}
\]
and $\omega \circledast \delta^{}_{\ZZ} = p \,
\delta^{}_{\ZZ}$. Furthermore,
$\delta^{}_{\!\vL} \circledast \delta^{}_{\ZZ} = \delta^{}_{-\vL}
\circledast \delta^{}_{\ZZ} = p \ts\ts \delta^{}_{\ZZ}$ follows from a
simple density calculation, where one observes that $-\vL$ is another
typical realisation if $\vL$ is one. But this implies the
orthogonality relations
$\ts\omega \circledast \widetilde{\nu} = \widetilde{\omega}
\circledast \nu = 0$ together with
\[
    \nu \circledast \widetilde{\nu} \, = \,
    \delta^{}_{\!\vL} \circledast \delta^{}_{-\vL} -
    p^2 \ts \delta^{}_{\ZZ} \, = \, \gamma - p^2 \ts \delta^{}_{\ZZ}
    \, = \, p (1-p) \, \delta^{}_{0} \, = \, ( \gamma )^{}_{0} \ts .
\]
We thus see that \eqref{eq:Bern-split} provides the splitting of
$\delta^{}_{\!\vL}$ in complete analogy to Theorem~\ref{thm:decomp},
and applies to almost all realisations of the lattice gas.

Clearly, this can be extended to higher dimensions, for instance via
replacing $\ZZ$ by $\ZZ^d$ in the above example, and to many other
stochastic systems, as treated in \cite[Ch.~11]{TAO1} or in
\cite{BBM}.  The constructive splitting approach gives a slightly
different interpretation to the method put forward in \cite{Luck,BBM},
where the splitting is done on the autocorrelation level by separating
the mean from the fluctuations.  This certainly deserves further
clarification in the setting of the Bartlett spectrum from the theory
of stochastic processes.

\subsection{Eberlein splitting for a random inflation}

Here, we return to the classic Fibonacci inflation that underlies
Section~\ref{sec:Fibo} and turn it into a \emph{random inflation}
by setting
\[
    \varrho \colon \;   b \mapsto a \, , \quad a \mapsto
    \begin{cases} a b , & \text{with probability } p , \\
        b a , & \text{with probability } 1-p ,  \end{cases}
\]
where $p \in [0,1 ]$ is fixed, and the rule is applied
\emph{locally}. This defines a system that was first analysed in
\cite{GL}, and has recently turned into an interesting paradigm for a
random system with both long-range order and some form of disorder
\cite{RS,GRS,GS}. Here, we only consider the geometric setting where
$a$ and $b$ stand for intervals of length $\tau$ and $1$,
respectively, with normalised Dirac measures on their left endpoints.

From \cite[Thm.~3.19]{BSS}, we know that the diffraction measure
almost surely satisfies
\begin{equation}\label{eq:ranfib-1}
    \widehat{\gamma} \, = \, \bigl( \widehat{\gamma} \bigr)_{\mathsf{pp}}
    + \bigl( \widehat{\gamma} \bigr)_{\mathsf{ac}} \ts ,
\end{equation}
where explicit expressions can be given. Clearly,
$( \widehat{\gamma} )^{}_{\mathsf{ac}} = 0$ for $p=0$ or $p=1$.  To
set this into our above scheme, let
$\vL = \vL_{a} \,\dot\cup\, \vL_{b}$ be the (typed) control points of
a typical realisation of the random Fibonacci inflation, say with
$0<p<1$ to avoid the deterministic limiting cases. Then, as shown in
\cite{BSS}, there are weighted Dirac combs $\omega^{}_{\alpha}$, with
$\alpha\in \{ a , b\}$, of the form
\[
    \omega^{}_{\alpha} \, = \sum_{x \in \ZZ[\tau]}
    h^{}_{\alpha} (x^{\star}) \, \delta^{}_{x}
\]
with functions $h^{}_{\alpha}$ that are continuous and supported on
the window $[-\tau,\tau]$ of the covering model set. In particular,
these Dirac combs are discretely supported in the regular model set
$\oplam \bigl( [-\tau, \tau] \bigr)$, and both have pure point
spectrum \cite{NS11}.  What is more, as follows from \cite{BSS}, the
pure point part of the diffraction of $\vL_{\alpha}$, agrees with the
diffraction of $\omega^{}_{\alpha}$.

Thus, using $\omega^{}_{\alpha}$ and defining
$\nu^{}_{\nts \alpha} = \delta^{}_{\! \vL_{\alpha}} \! - \ts
\omega^{}_{\alpha}$, where the latter is a random measure for each
$\alpha \in \{ a, b\}$, we almost surely (in the sense of the
underlying process) are in a situation analogous to the one from
Theorem~\ref{thm:decomp}. In particular, for fixed
$u^{}_{a}, u^{}_{b} \in \CC$, the random weighted Dirac comb
\[
     \mu \, = \, u^{}_{a} \, \delta^{}_{\! \vL_{a}}  \!
                  + \ts u^{}_{b} \, \delta^{}_{\! \vL_{b}}
\]
almost surely has the autocorrelation
$\gamma = (\gamma)^{}_{\mathsf{s}} + (\gamma)^{}_{0}$ where
$(\gamma)^{}_{\mathsf{s}}$ agrees with the autocorrelation of
$u^{}_{a}\ts \omega^{}_{a} + u^{}_{b}\ts \omega^{}_{b}$ and
$(\gamma)^{}_{0}$ is the autocorrelation of
$u^{}_{a} \nu^{}_{\nts a} + u^{}_{b} \nu^{}_{b}$. Taking Fourier
transforms brings us back to \eqref{eq:ranfib-1}, where we refer to
\cite{Moll,Timo,BSS} for explicit formulas. Once again, this gives a
constructive variant of the decomposition advocated in \cite{Au},
which is fully compatible with the statistical separation of mean and
variance from \cite{Luck,BBM}.  In fact, the latter approach quite
generally seems to lead to a related further decomposition into
singular versus absolutely continuous components, at least for the
class of random inflations \cite{Gohlke}, which opens a promising path
to future investigations.

\section*{Acknowledgements}

We thank Philipp Gohlke, Uwe Grimm, Neil Ma\~{n}ibo and Timo Spindeler
for helpful discussions and comments on the manuscript.  We thank two
referees for their thoughtful comments, which helped to improve the
presentation.  This work was supported by the German Research
Foundation (DFG), within the CRC 1283 at Bielefeld University (MB),
and by the Natural Sciences and Engineering Council of Canada (NSERC),
via grant {2020-00038} (NS).

\end{document}